\newenvironment{customthm}[1]
  {\innercustomthm}
  {\endinnercustomthm}
\newcommand{\abs}[1]{\left\lvert #1 \right\rvert}
\newtheoremstyle{mytheor}
    {1ex}{1ex}{\normalfont}{0pt}{\scshape}{.}{1ex}
    {{\thmname{#1 }}{\thmnumber{#2}}{\thmnote{ (#3)}}}
\newcommand{\N}{\mathbb{N}}
\newcommand{\R}{\mathbb{R}}
\newtheorem{defi}{Definition}
\newtheorem{theorem}{Theorem}
\newtheorem{lemma}{Lemma}
\newtheorem{prop}{Proposition}
\newcommand{\dx}[1]{\mathrm{d{#1}}}
\newcommand{\dX}{\mathrm{d_{Haar}(X)}}
\let\bs=\bigskip
\begin{document}
\title{Asymptotics of non-integer moments of the logarithmic derivative of characteristic polynomials over $SO(2N+1)$}

\author{E. Alvarez}  
\email[University of Bristol]{emilia.alvarez@bristol.ac.uk}
\author{P. Bousseyroux}  
\email[\'Ecole
Normale Sup\'erieure de Paris]{pierre.bousseyroux@ens.psl.eu}
\author{N.C. Snaith }
\email[University of Bristol]{n.c.snaith@bris.ac.uk}

\begin{abstract}
    This work computes the asymptotics of the non-integer moments of the logarithmic derivative of characteristic polynomials of matrices from the $SO(2N+1)$ ensemble. It follows from work of Alvarez and Snaith who computed the asymptotics of the integer moments of the same statistic over both $SO(N)$ ensembles as well as the $USp(2N)$ ensemble. 
 \end{abstract}

\maketitle
The following work develops further the argument made in \cite{AS}, that the asymptotics of the integer moments of the logarithmic derivative of characteristic polynomials from the classical compact ensembles evaluated at a point approaching 1, are strongly affected by the term corresponding to the eigenvalue closest to 1. Since the logarithmic derivative has the characteristic polynomial in the denominator, an eigenvalue at 1 causes a singularity in the logarithmic derivative when evaluated at $s=1$. This behaviour is observed most clearly in the odd orthogonal ensemble $SO(2N+1)$, since all matrices in this ensemble have a fixed eigenvalue at the point 1. In this paper we show in the case of $SO(2N+1)$ that the asymptotics of the moments of the logarithmic derivative are dominated by the eigenvalue at 1 for non-integer moment parameter as well. 

The motivation to study non-integer moments comes from number theory: an approach to the Riemann hypothesis is through a formula of Littlewood that requires knowledge of arbitrary moments of the derivative of the Riemann zeta-function, where the moment parameter is continuous in a subset of $\mathbb{R}$. The connection between number theory and random matrices is well established, see for example \cite{KS,kn:hughes03,CFKRS} or the review papers \cite{Conrey,kn:snaith10,kn:keasna03}, therefore, as random matrix theorists we study the analogous question which is \textit{can we compute the non-integer moments of the derivative of characteristic polynomials of random matrices?} More often than not, we are able to prove theorems on the random matrix side which shed light on conjectures for the number theoretic analogues. In this case, however, an exact form on the random matrix side has remained elusive. In his thesis, Hughes \cite{kn:hug01} conjectured the large $N$ limit for the joint moments of characteristic polynomials and their derivatives over $U(N)$ for general moment parameters. The best progress on non-integer moments of the derivative before 2020 was the work of Winn in \cite{Winn}, where he obtained asymptotics of the odd moments over the unitary ensemble while evaluating the $2k^{th}$ moment, so effectively Winn was able to get a formula for $k$ a half integer. In \cite{AKW}, Assiotis et al. use the form obtained by Winn, and further show that Hughes' conjectured limit does indeed exist. They show it is equal to the moments of a random variable related to a point process and to Hua-Pickrell measures, for non-integer moment parameters. This is however difficult to expand in an exact form. The ensembles $U(N)$, $SO(N)$ and $USp(2N)$ with Haar measure are examples of $\beta=2$ ensembles (where the eigenvalues display quadratic repulsion from their neighbours), and in \cite{for22} Forrester is able to generalise Winn's approach to consider non-integer moments of general $\beta > 0$. 

In this work, we are not attacking the non-integer moments of the derivative of characteristic polynomials themselves, but rather we are looking at the moments of the logarithmic derivative. These are interesting in their own right, they have analogues in number theory with moments of logarithmic derivatives of $L$-functions and they are related to the joint moments of characteristic polynomials and their derivatives. Indeed, the moments of the logarithmic derivative of the Riemann zeta function was first studied by Selberg to study primes in short intervals \cite{Selberg}. More recently, Goldston, Gonek and Montgomery showed that the moments of the logarithmic derivative of the Riemann zeta function are related to the pair correlation conjecture for the zeroes of the Riemann zeta function and to counting prime powers \cite{GGM}. Farmer et. al build on these methods in \cite{FGLL} to study almost primes by assuming the random matrix conjectures that the correlation functions of zeroes of $\zeta(s)$ agree, in appropriate scaling limits,  with the correlation functions of eigenvalues of Gaussian Unitary matrices. Finally, the distribution of the logarithmic derivative of $\zeta(s)$ has also been studied just off the critical line by Guo \cite{Guo} and Lester \cite{Lester}. These results are related to results from $U(N)$; see the works \cite{MM} and \cite{Basor} for the integer moments of the logarithmic derivative of characteristic polynomials over $U(N)$. 

Since number theorists are also interested in analogous results for families of $L$-functions which have heuristically been shown to behave like random matrices from the other classical compact ensembles (see \cite{KatzS}), it is of interest for us to consider the moments of the logarithmic derivative of characteristic polynomials from $SO(2N+1)$ as well. See the work of Ihara \cite{Ihara}, Matsumoto \cite{Mats} and Mourtada and Murty \cite{MouMur} on moments of the logarithmic derivatives of Dirichlet $L$-functions. Other probabilistic models, different from random matrix theory, have also been developed in \cite{gransound, hattori} to study $L$-functions and their logarithmic derivatives. In \cite{hamieh1} Hamieh and McClenagan study the limiting distribution of the logarithmic derivative of Dirichlet $L$-functions varying the characters of the $L$-function. They also show that the probabilistic model developed in \cite{gransound} for studying $L$-functions in the same regime is a good model for the logarithmic derivative. In \cite{lamzouri1} Lamzouri also studies the logarithmic derivative of $L$-functions evaluated at the point 1 in a probabilistic regime. 

We focus here on the $SO(2N+1)$ ensemble as this is the only ensemble with a fixed eigenvalue. Our methods could be amenable to adaptation for the $U(N)$, $SO(2N)$ and $USp(2N)$ ensembles if one were able to isolate the contribution of the first eigenvalue for each matrix in the ensemble. 

\section{Preliminaries}
We define here our main objects of consideration and recall some key tools we will use in the proof below. 

\begin{defi}
    The set of matrices denoted by $SO(2N+1)$ is defined as
\begin{equation}
    SO(2N+1) := \{X \in GL(2N+1, \R) : XX^{T} = X^{T}X = I_N, \quad \det(X) = 1 \}. 
\end{equation} 
\end{defi}
A matrix $X \in SO(2N+1)$ has $\det(X) = 1$, $N$ pairs of complex conjugate eigenvalues and an additional eigenvalue exactly at 1. 

\begin{defi}
We define the characteristic polynomial of a matrix $X$ to be 
\begin{equation}
    \Lambda_X(s) = \det(I - sX^{*}) ,
\end{equation}
where $X^*$ is the conjugate transpose of $X$. 
\end{defi}
The eigenvalues of a matrix $X$ from $SO(2N+1)$ come in complex conjugate pairs $e^{i\theta_1},e^{-i\theta_1} \dots, e^{i\theta_N}, e^{-i\theta_N}$, plus one eigenvalue at 1, and so its characteristic polynomial can also be expressed as:
\begin{equation}
   \Lambda_X(s) = (1-s)\prod_{j=1}^N (1 - se^{-i\theta_j})(1 - se^{i\theta_j}).
\end{equation}
 We will be computing moments by averaging over the set $SO(2N+1)$ with respect to Haar measure, denoted $\dX$, which is the unique translation invariant measure. We refer to this as the odd orthogonal ensemble.  
 
 We recall the infinity (or max) norm over an interval $\mathcal{I}$ denoted
\begin{equation}
    ||f(x)||_{\infty, \mathcal{I}}
\end{equation} is given by the supremum of $f(x)$ for $x \in \mathcal{I}$.
We also recall that a homographic function is of the form 
\begin{equation}\label{homographic}
    h(x) = \frac{ax + b}{cx + d}. 
\end{equation} It has a singularity at $\tfrac{-d}{c}$ and if $ad - bc < 0$, it is decreasing on the intervals $(-\infty, \tfrac{-d}{c})$ and $(\tfrac{-d}{c},\infty)$. 

Finally, we use the following generalization of binomial coefficients:

\begin{equation}
    \binom{\alpha}{n}= \frac{1}{n!}\prod_{j = 0}^{n-1} (\alpha - j),
\end{equation}
where $\alpha\in \R$ and $n\in \N$.

Writing out the logarithmic derivative in terms of its eigenvalues ($1,e^{\pm i\theta_1}, \dots, e^{\pm i\theta_N}$), we can isolate the term that dominates as $s$ approaches 1. We have that 
\begin{align}\label{SO2N+1logder}
    \frac{\Lambda_X'}{\Lambda_X}(s)=\frac{-1}{1-s}+\sum_{n=1}^N \left( \frac{-e^{i\theta_n}}{1-se^{i\theta_n}}-\frac{e^{-i\theta_n}}{1-se^{-i\theta_n}} \right) \nonumber \\
    = \frac{-1}{1-s}+ \sum_{n=1}^N \frac{2s - 2\cos(\theta_n)}{s^2 - 2s\cos(\theta_n) + 1} \nonumber \\
    = \frac{-1}{1-s} \left( 1 + (s-1) \sum_{n=1}^N \frac{2s - 2\cos(\theta_n)}{s^2 - 2s\cos(\theta_n) + 1} \right).
\end{align} In \cite{AS}, we observe that the leading order behaviour, $\left(\frac{-N}{a}\right)^K$ in Theorem \ref{thm4}, comes entirely from substituting $s=e^{-a/N}$ into  $\frac{-1}{1-s}$ as $N\rightarrow \infty$ and raising to the $K^{th}$ power. Indeed, the asymptotic computation of integer moments is described in the following theorem.
\begin{theorem}[Alvarez and Snaith, 2021]\label{thm4}
  Let $\Lambda_X(s)$ denote the characteristic polynomial of a matrix $X \in SO(2N+1)$, the group of odd dimensional random orthogonal matrices with determinant 1 equipped with the Haar measure $\dX$. Let $K \in \mathbb{N}$, $\alpha = a/N$ where $a = o(1)$ as $N \to \infty$ and $\mathfrak{Re}(a)>0$. Then, as $N$ tends to $\infty$, the moments of the logarithmic derivative of $\Lambda_X(s)$ evaluated at $e^{-\alpha}$ are given by:  
  \begin{align}
    \int_{SO(2N+1)} \left(\frac{ \Lambda_X^{'}}{\Lambda_X }(e^{-\alpha})\right)^K \dX \nonumber \\
     = (-1)^K \left[ \left( \frac{N}{a} \right)^K - \frac{N^K}{a^{K-1}}K \right] + \mathcal{O}\left(\frac{N^{K-1}}{a^{K-1}}\right)  + \mathcal{O}\left( \frac{N^K}{a^{K-2}}  \right). 
\end{align}
\end{theorem}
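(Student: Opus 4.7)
The plan is to exploit the decomposition in (\ref{SO2N+1logder}), which factors out the singular contribution from the fixed eigenvalue at $1$, and then expand by the binomial theorem. Writing $R(s) := \sum_{n=1}^N \frac{2s-2\cos\theta_n}{s^2-2s\cos\theta_n+1}$ for the contribution of the non-trivial eigenvalues, (\ref{SO2N+1logder}) gives
\[
\left(\frac{\Lambda'_X}{\Lambda_X}(s)\right)^K \;=\; \frac{(-1)^K}{(1-s)^K}\bigl(1 + (s-1)R(s)\bigr)^K \;=\; \frac{(-1)^K}{(1-s)^K}\sum_{k=0}^K \binom{K}{k}(s-1)^k R(s)^k.
\]
After integrating against Haar measure and setting $s = e^{-a/N}$, the moment splits into contributions indexed by $k$, which I would analyse separately.

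The $k=0$ term is deterministic. Taylor expanding $1-e^{-a/N} = (a/N)(1 - a/(2N) + O(a^2/N^2))$ and raising to the power $-K$ gives $(-1)^K(N/a)^K + O(N^{K-1}/a^{K-1})$, producing the main $(-N/a)^K$ and an error that lands inside $O(N^{K-1}/a^{K-1})$. The $k=1$ term requires $\mathbb{E}[R(e^{-a/N})]$. I would compute it either via the power-series identity $\frac{\Lambda'_X}{\Lambda_X}(s) = -\sum_{j\geq 1}s^{j-1}\mathrm{Tr}(X^j)$ together with the known values of $\mathbb{E}_{SO(2N+1)}[\mathrm{Tr}(X^j)]$ coming from character orthogonality and representation theory, or by integrating $g_s(\theta) := \frac{2s-2\cos\theta}{s^2-2s\cos\theta+1}$ against the one-point correlation function of $SO(2N+1)$ produced by Weyl's integration formula. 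Either route should give $\mathbb{E}[R(e^{-a/N})] = N + O(1)$; multiplied by the prefactor $(-1)^{K+1}K(N/a)^{K-1}(1+O(a/N))$, this contributes $(-1)^{K+1}K\,N^K/a^{K-1}$ plus error $O(N^{K-1}/a^{K-1})$, accounting for the second leading term in the theorem.

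For $k \geq 2$ the task is to show $|\mathbb{E}[R(e^{-a/N})^k]| = O(N^k)$, in which case each term contributes at most $(N/a)^{K-k}\cdot (a/N)^k \cdot N^k = N^K/a^{K-k}$. Since $a = o(1)$, the worst case is $k=2$, yielding the stated error $O(N^K/a^{K-2})$, with larger $k$ producing strictly smaller contributions. The claim $\mathbb{E}[R^k] = O(N^k)$ reduces to controlling correlation functions of the determinantal $SO(2N+1)$-eigenangle process applied to the symmetric function $\prod_i g_s(\theta_{n_i})$; one can handle this by bounding the variance (determinantal linear-statistic variance is typically $O(\log N)$ here) and then controlling higher cumulants inductively.

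The main obstacle is the uniform bound $|\mathbb{E}[R(s)^k]| = O(N^k)$. The function $g_s(\theta)$ reaches magnitude $\sim N/|a|$ when $|\theta| \lesssim |a|/N$, so a purely pointwise estimate on $R$ would allow fluctuations as large as $N \cdot (N/|a|)$. The saving mechanism is the repulsion encoded in the Weyl density for $SO(2N+1)$: the factor $\prod_j \sin^2(\theta_j/2)$ vanishes quadratically at $\theta_j = 0$, pushing the non-trivial eigenvalues away from the fixed eigenvalue at $1$ and tempering the near-singularity of $g_s$ there. Quantifying this cancellation for all integers $k=2,\ldots,K$, ideally with constants independent of $k$, is the technical crux of the argument.
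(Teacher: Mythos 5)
This theorem is imported from Alvarez--Snaith \cite{AS} and stated in the present paper without proof, so there is no ``paper's own proof'' here to compare against; the closest analogous material is the $h(n,s,N)$ analysis of Section \ref{last_section}, which uses your same starting decomposition (isolating the $\tfrac{-1}{1-s}$ term and treating powers of $R(s):=\sum_n\frac{2s-2\cos\theta_n}{s^2-2s\cos\theta_n+1}$ via correlation functions) in pursuit of a related but differently-scoped asymptotic.

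Your overall plan is structurally the natural one, but two things need attention. First, the intermediate claim $\mathbb{E}[R(e^{-a/N})] = N + \mathcal{O}(1)$ is false: from the paper's explicit $n=1$ computation (equation \eqref{h_1} and the theorem following it), $\mathbb{E}[R(s)] = h(1,s,N)/(s-1) = N - N(N-1)(1-s) + \mathcal{O}(N^3(1-s)^2) = N(1-a) + \mathcal{O}(a) + \mathcal{O}(Na^2)$, so the error is $\mathcal{O}(Na)$, which is not $\mathcal{O}(1)$ for general $a=o(1)$. The theorem survives anyway, since $\mathcal{O}(Na)\cdot(N/a)^{K-1}=\mathcal{O}(N^K/a^{K-2})$ lands inside the stated error, but your stated justification for the $k=1$ term is not right as written. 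Second, and more seriously, the step you yourself flag as the crux --- the uniform bound $\mathbb{E}[R^k]=\mathcal{O}(N^k)$ for $2\le k\le K$ with only $a=o(1)$ --- is left as a sketch, and the difficulty is real: the present paper attacks exactly this quantity via the repulsion bound $R_1(\theta)\ll N^3\sin^2(\theta/2)$ (Lemma \ref{boundingR_1}) together with a multinomial case analysis, yet obtains only $h(n)=\mathcal{O}(N^{n+4}(1-s)^4)+\mathcal{O}(N^{n+2}(1-s)^3)$ for $n\ge3$, which gives $\mathbb{E}[R^n]=\mathcal{O}(N^{2n}a^{4-n})+\mathcal{O}(N^{2n-1}a^{3-n})$ --- this is $\mathcal{O}(N^n)$ only when $a$ decays like a power of $1/N$, which is precisely why Theorem \ref{theorem2} imposes $a=o(1/N^{m+2})$. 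So the variance/cumulant route you gesture at would have to extract considerably more cancellation than the correlation-function estimates actually carried out in this paper, and until that is done the proof is incomplete. (Minor: in the $k\ge2$ estimate, the prefactor should be $(N/a)^K\cdot(a/N)^k$ rather than $(N/a)^{K-k}\cdot(a/N)^k$; the claimed output $N^K/a^{K-k}$ is nonetheless correct.)
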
 

In the present work, we extend this result to non-integer moments. Our main result is Theorem \ref{absthm}. 

\begin{theorem}\label{absthm}
Let $\Lambda_X(s)$ denote the characteristic polynomial of a matrix $X \in SO(2N+1)$, the group of odd orthogonal random matrices of odd dimension and determinant 1, equipped with the Haar measure $\dX$. Let $\alpha = a/N>0$ with $a \in \mathbb{R}_{+}$, $a = o(1)$ as $N \to \infty$, and the moment parameter $K \in \mathbb{R}_{+}$. Then, 
\begin{equation}
\left(\frac{-a}{N}\right)^K\int_{SO(2N+1)} \left(\frac{\Lambda_X'}{\Lambda_X}(e^{-\alpha})\right)^K \dX \underset{N\to +\infty}{=} 1 - Ka + \mathcal{O}(a/N) + \mathcal{O}(a^2).
\end{equation}
\end{theorem}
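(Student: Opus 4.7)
The plan is to exploit the decomposition \eqref{SO2N+1logder} to isolate the contribution of the fixed eigenvalue, reducing the problem to a Taylor expansion. First, I would introduce
\[
Y_N(X,\alpha) := (e^{-\alpha}-1)\sum_{n=1}^N \frac{2(e^{-\alpha}-\cos\theta_n)}{1-2e^{-\alpha}\cos\theta_n + e^{-2\alpha}},
\]
so as to rewrite
\[
\left(\frac{-a}{N}\right)^K\!\!\left(\frac{\Lambda_X'}{\Lambda_X}(e^{-\alpha})\right)^K = \left(\frac{\alpha}{1-e^{-\alpha}}\right)^K (1+Y_N)^K.
\]
The scalar prefactor is $1 + K\alpha/2 + O(\alpha^2) = 1 + O(a/N)$, so the task reduces to estimating $\mathbb{E}\bigl[(1+Y_N)^K\bigr]$. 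Before doing so, I would verify the pointwise positivity $1+Y_N>0$, which is equivalent to $\Lambda_X'/\Lambda_X(e^{-\alpha})<0$: this holds for $a<1$ because the singular term $-1/(1-e^{-\alpha})\sim -N/a$ dominates the remaining sum, whose maximum value (attained when $\cos\theta_n\to -1$) is bounded above by $N$. Thus $(1+Y_N)^K$ is unambiguously defined for every $K\in\mathbb{R}_+$.

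Next, I would establish a \emph{deterministic} Taylor expansion of $Y_N$. Writing $P_X(s) = \prod_{n=1}^N(1-2s\cos\theta_n + s^2)$, one has $Y_N = (e^{-\alpha}-1)P_X'(e^{-\alpha})/P_X(e^{-\alpha})$, and the key computation is the $X$-independent identity $P_X'(1)/P_X(1) = \sum_{n=1}^N 1 = N$. Taylor-expanding $P_X'/P_X$ about $s=1$ then yields
\[
Y_N = -a + R_N(X), \qquad R_N(X) = \frac{a^2}{2N} + \alpha^2 \sum_{n=1}^N \frac{\cos\theta_n}{1-\cos\theta_n} + O\bigl(\alpha^3 S_N(X)\bigr),
\]
with $S_N(X)$ another sum in the $\theta_n$. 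Combined with the Taylor expansion $(1+u)^K = (1-a)^K + K(1-a)^{K-1}(u+a) + O((u+a)^2)$ and the fact $(1-a)^K = 1-Ka+O(a^2)$, this reduces matters to showing $\mathbb{E}[R_N] = O(a^2)+O(a/N)$ and $\mathbb{E}[R_N^2] = O(a^2)$. I would compute $\mathbb{E}[R_N]$ via the one-point density of $SO(2N+1)$ applied to $\cos\theta/(1-\cos\theta)$: although this function has a $\theta^{-2}$ singularity at $\theta=0$, the density vanishes there like $\theta^2$ (thanks to the $\sin^2(\theta/2)$ factor in the Weyl integration measure), so the integral is convergent and its leading contribution is of order $N^2$, yielding $\alpha^2\cdot N^2 = a^2$. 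The second moment $\mathbb{E}[R_N^2]$ would be handled similarly via the two-point correlations, or by a Cauchy--Schwarz argument combined with a uniform bound.

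The main obstacle will be justifying the Taylor expansion of $(1+u)^K$ uniformly on the support of $Y_N$ for non-integer $K$: although $R_N$ is \emph{typically} of order $a^2$, on rare events where some $\theta_n$ is anomalously close to $0$ it can become comparable to $N$. I would handle this by splitting the expectation over a ``good'' event $\{|R_N|\leq c\}$ on which the Taylor remainder is controlled, and its complement. On the complement, the contribution would be bounded by combining a tail estimate on the smallest eigenvalue angle of $SO(2N+1)$ (of the form $\mathbb{P}(\theta_{\min}<\varepsilon) = O(N^3\varepsilon^3)$, available from the determinantal structure of the eigenvalue process) with the deterministic upper bound $1+Y_N = O(N)$ derived from the homographic analysis of each summand (cf. \eqref{homographic}) together with the uniform bound $|(e^{-\alpha}-1)/(e^{-\alpha}-e^{\pm i\theta_n})|\leq 1$. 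Controlling this bad-event contribution so as to subsume it into the error $O(a^2)+O(a/N)$ is the main technical step of the argument.
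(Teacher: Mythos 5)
Your starting point matches the paper's: factor out the pole at $s=1$, reduce to $\mathbb{E}\bigl[(1+Y_N)^K\bigr]$, and observe via the homographic bound that $Y_N$ is bounded below (so that the non-integer power is well-defined). After that, your route diverges genuinely. The paper does \emph{not} Taylor-expand $Y_N$ in $\alpha$; instead it uses Lemma \ref{taylorlem}, a \emph{global} polynomial approximation $(1+x)^K\approx\sum_{n=0}^m\binom{K}{n}x^n$ valid uniformly on $[-1/2,\infty)$, which works because the crucial choice $m\geq\lfloor K\rfloor$ makes the $(m+1)$-th derivative of $(1+x)^K$ bounded there. The remainder is then controlled by $\mathbb{E}[Y_N^{m+1}]$ with $m+1$ even, and this is reduced to known integer moments via the change-of-basis Lemma \ref{polyexp} and Theorem \ref{thm4}. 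No good/bad event split is ever needed, because the polynomial approximation holds pointwise for every matrix.

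The concrete gap in your proposal is the bad-event estimate. Take the good event to be $\{|R_N|\leq c\}$ as you suggest. From $R_N\approx \alpha^2\sum_n\cos\theta_n/(1-\cos\theta_n)$, the term from the closest eigenangle is $\sim a^2/(N^2\theta_{\min}^2)$, so $|R_N|>c$ forces $\theta_{\min}\lesssim a/\sqrt{N}$. Your tail estimate then gives $\mathbb{P}(\text{bad})\lesssim N^3(a/\sqrt{N})^3=N^{3/2}a^3$, and combining this with the deterministic bound $1+Y_N=O(N)$ yields a bad-event contribution of order $N^{K+3/2}a^3$. This is \emph{not} $O(a^2)+O(a/N)$ under the hypothesis $a=o(1)$: e.g.\ for $a=1/\log N$ it diverges. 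So as written the split cannot close the argument. To salvage it you would need to replace the crude $O(N^K)$ bound on the bad event by something of order $O(1)$, which in turn requires controlling higher moments of $Y_N$ (e.g.\ via dyadic decomposition and Markov on $\mathbb{E}[Y_N^{2j}]$) — and that re-introduces precisely the integer-moment input that the paper's change-of-basis argument packages cleanly. In short: your good-event Taylor analysis is fine, your positivity and one/two-point correlation computations are correct, but the tail control as sketched does not suffice, and the paper's degree-$m$ global approximation is exactly the device that makes such a split unnecessary.
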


This theorem proves that the main contribution of the moments of the logarithmic derivative of characteristic polynomials over $SO(2N+1)$ comes from the term corresponding to the eigenvalue at 1, even for non-integer moments. Consider the intuition behind the second term “$-Ka$” in the Taylor’s expansion; the following argument is not be precise but serves to give a feeling and moral justification for the result. 

We recall that we have set $s = e^{-a/N}$. Factoring out the leading order term $\frac{-1}{1-s}\sim \frac{-N}{a}$ in \eqref{SO2N+1logder}, we want to study 

\begin{equation}
    \int_{SO(2N+1)} \left(1 + (s-1)\sum_{n=1}^N \frac{2s - 2\cos(\theta_n)}{s^2 - 2s\cos(\theta_n) + 1}\right)^K \dX.
\end{equation}

If $N$ is fixed and none of the $\theta_j$'s are equal to $0$, we have that

\begin{equation}\label{limit}
    \sum_{n=1}^N \frac{2s - 2\cos(\theta_n)}{s^2 - 2s\cos(\theta_n) + 1}\underset{s\to 1}{\longrightarrow}\sum_{n=1}^N \frac{2 - 2\cos(\theta_n)}{2 - 2\cos(\theta_n)} = N.
\end{equation}

Now, from an statistical point of view, it is unlikely that many matrices will have an eigenvalue $e^{i\theta_n}$ near 1 because of the repulsion from the eigenvalue at 1 in the eigenvalue joint probability density function for $SO(2N+1)$ on $[0, \pi]^N$:
\begin{equation}
    \frac{2^{N^2}}{\pi^N N!} \Delta^2(\cos(\theta))\prod_{m=1}^N \sin^2(\theta_m/2).
\end{equation}
The Vandermonde $\Delta^2(\cos(\theta)):=\prod_{1\leq j < k\leq N} (\cos \theta_k - \cos\theta_j)^2$ causes a repulsion between the eigenvalues, and the factor $\sin^2(\theta_m/2)$ encodes the repulsion from the point 1.

By using \eqref{limit} and the repulsion, we are then tempted to write

\begin{equation}\label{intuition}
    \left(1 + (s-1)\sum_{n=1}^N \frac{2s - 2\cos(\theta_n)}{s^2 - 2s\cos(\theta_n) + 1}\right)^K \underset{N\to \infty}{\approx} (1 + (s-1)N)^K \underset{N\to \infty}{\approx} (1 - a)^K \approx 1 - Ka
\end{equation}
which gives us our result. 

Nevertheless, we need to be careful about $N$ because it is not fixed. We can be more precise with the following reasoning. We start by rewriting

\begin{equation}\label{summand2}
\sum_{n=1}^N \frac{2s - 2\cos(\theta_n)}{s^2 - 2s\cos(\theta_n) + 1}=\frac{N}{s} + \frac{s^2-1}{s}\sum_{n=1}^N \frac{1}{(1-s)^2 + 4s\sin^2(\theta_n/2)}.
\end{equation}

It can be shown that

\begin{equation}
   \int_{SO(2N+1)}  \sum_{n=1}^N \frac{1}{(1-s)^2 + 4s\sin^2(\theta_n/2)}  \dX= \int_{0}^{\pi} \frac{R_1(\theta)}{(1-s)^2 + 4s\sin^2(\theta/2)}d\theta
\end{equation}
where $R_1(\theta)$ is the $1$-point correlation function of eigenvalues of matrices from $SO(2N+1)$, because $R_1(\theta)$ contains information on the relative likelihood of finding an eigenangle at various values of $\theta$ as we sample matrices from the odd orthogonal ensemble. 

The behaviour of $R_1$ close to $0$ measures the strength of the repulsion from the point $1$. If $R_1(\theta)$ goes to $0$ fast enough when $\theta$ goes to $0$, then eigenangles near $0$ are unlikely and the above integral can be bounded by something which does not depend on $s$ or $N$ because we avoid the singularity $\frac{1}{4\sin^2(\theta/2)}$ when $s\to 1$. In Section \ref{last_section} we will see  that it is indeed the case. And so, by using \eqref{summand2}, the approximation

\begin{equation}
\sum_{n=1}^N \frac{2s - 2\cos(\theta_n)}{s^2 - 2s\cos(\theta_n) + 1} \approx \frac{N}{s} \approx{N}
\end{equation}
as $s\rightarrow 1$ seems to be more justified than \eqref{limit} even when $N$ is not fixed.

To make the above reasoning rigorous, in Section \ref{absthm} we will approximate $(1+x)^K$ by a polynomial: 

\begin{equation}\label{polynomialapproximation}
    (1+x)^K \approx \sum_{n=0}^{\lfloor K \rfloor} \binom{K}{n} x^n,
\end{equation}
where $x$ corresponds to the sum over the complex conjugate pairs of eigenvalues divided by the contribution from the eigenvalue at 1, i.e. $(s-1)\sum_{n=1}^N \frac{2s - 2\cos(\theta_n)}{s^2 - 2s\cos(\theta_n) + 1}$. Here $\lfloor K \rfloor$ is the floor of $K$, the largest integer  less than or equal to $K$. 

Then, we will expand this polynomial approximation in the basis $((1+x)^n)_{n\in \N}$ in order to use Theorem \ref{thm4} for integer moments.

This method of polynomial approximation can provide a more precise Taylor expansion if you assume that $a$ goes to $0$ fast enough. Indeed, by using some changes of variables and some good approximations of the $1-$ and $2$-point correlation functions, we improve Theorem \ref{absthm} in Section \ref{last_section} without applying Theorem \ref{thm4}.

\begin{theorem}\label{theorem2}
Let $\Lambda_X(s)$ denote the characteristic polynomial of a matrix $X \in SO(2N+1)$, the group of odd orthogonal random matrices of odd dimension and determinant 1, equipped with the Haar measure $\dX$. Let $m$ be an odd integer such as $m\geq 3$ and $m\geq \lfloor K\rfloor$. Let $\alpha = a/N>0$ with $a = o\left(\frac{1}{N^{m+2}}\right)$ as $N \to \infty$, and the moment parameter $K \in \mathbb{R}_{+}$. 

Then, 

\begin{multline}
\left(\frac{-a}{N}\right)^K\int_{SO(2N+1)} \left(\frac{\Lambda_X'}{\Lambda_X}(e^{-\alpha})\right)^K \dX \underset{N\to +\infty}{=} \\1 - Ka + \frac{Ka}{2N} + \frac{K(K+1)}{2}a^2 - \frac{K(K+1)a^2}{2N} + \frac{a^2}{N^2}K\frac{3K-1}{24} + \mathcal{O}(N^m a^3).
\end{multline}
\end{theorem}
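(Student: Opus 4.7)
The plan is to isolate the contribution of the fixed eigenvalue at $1$ and expand the remaining ensemble integral using the $1$- and $2$-point correlation functions of $SO(2N+1)$, exploiting the rapid decay $a = o(1/N^{m+2})$. From \eqref{SO2N+1logder}, factor out $-1/(1-s)$ to write
\[
\left(\frac{-a}{N}\right)^{K} \left(\frac{\Lambda_X'}{\Lambda_X}(s)\right)^{K} = \left(\frac{a/N}{1-s}\right)^{K} \bigl(1 + (s-1)T(s,\theta)\bigr)^{K},
\]
with $T(s,\theta) := \sum_{n=1}^N (2s - 2\cos\theta_n)/(s^2 - 2s\cos\theta_n + 1)$. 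The deterministic prefactor expands, via the series $x/(1-e^{-x}) = 1 + x/2 + x^2/12 + \mathcal{O}(x^{4})$ raised to the $K$-th power, as
\[
\left(\frac{a/N}{1-s}\right)^{K} = 1 + \frac{Ka}{2N} + \frac{K(3K-1)}{24}\frac{a^{2}}{N^{2}} + \mathcal{O}\!\left(\frac{a^{3}}{N^{3}}\right),
\]
already producing the $Ka/(2N)$ and $Ka^{2}(3K-1)/(24N^{2})$ terms of the statement. It remains to show that $J := \int_{SO(2N+1)} (1 + (s-1)T(s,\theta))^K \dX$ satisfies $J = 1 - Ka + \frac{K(K+1)}{2}a^{2} - \frac{Ka^{2}}{2N} + \mathcal{O}(N^{m}a^{3})$; multiplying by the prefactor recovers the theorem.

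Using \eqref{summand2}, write $1 + (s-1)T(s,\theta) = A + B$, where $A(s,N) := 1 + N(s-1)/s$ is deterministic and
\[
B(s,\theta) := \frac{(s-1)^{2}(s+1)}{s}\,S_N(s,\theta), \qquad S_N(s,\theta) := \sum_{n=1}^{N} \frac{1}{(1-s)^{2} + 4s\sin^{2}(\theta_n/2)}.
\]
Since $B, A \geq 0$, apply Taylor's theorem with Lagrange remainder to $(1+y)^K$ on $y \geq 0$, truncated at degree $m$ in the spirit of \eqref{polynomialapproximation}:
\[
(A+B)^{K} = \sum_{n=0}^{m}\binom{K}{n} A^{K-n} B^{n} + \mathcal{R},\qquad |\mathcal{R}| \leq C_{K,m}\,A^{K-m-1} B^{m+1}.
\]
Taylor expanding $A = 1 - a - a^{2}/(2N) + \mathcal{O}(a^{3}/N^{2})$ gives $A^{K} = 1 - Ka + K(K-1)a^{2}/2 - Ka^{2}/(2N) + \mathcal{O}(a^{3})$. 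Matching with the target for $J$, the $B$-terms must contribute exactly $Ka^{2} + \mathcal{O}(N^{m}a^{3})$.

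For $n=1$, use the $1$-point correlation function $R_1(\theta) = (2/\pi)\sum_{k=0}^{N-1}\sin^{2}((k+1/2)\theta)$ of $SO(2N+1)$, together with the Dirichlet-kernel identity $\int_0^\pi \sin^{2}((k+1/2)\theta)/\sin^{2}(\theta/2)\,d\theta = (2k+1)\pi$, to obtain exactly
\[
\int_{SO(2N+1)} S_N(1,\theta)\,\dX = \frac{1}{2\pi}\sum_{k=0}^{N-1}(2k+1)\pi = \frac{N^{2}}{2}.
\]
For $s$ close to $1$, a change of variable $\theta = (1-s)\phi/\sqrt{s}$ handles the near-singularity at $\theta = 0$ and yields $\int_{SO(2N+1)} S_N(s,\theta)\,\dX = N^{2}/2 + \mathcal{O}(Na)$. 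Since $(s-1)^{2}(s+1)/s = 2a^{2}/N^{2} + \mathcal{O}(a^{3}/N^{3})$, this gives $K A^{K-1}\int B\,\dX = Ka^{2} + \mathcal{O}(a^{3})$, which lies within the tolerated error. For $n=2$, the $2$-point correlation function and the determinantal structure of $SO(2N+1)$ yield $\int S_N^{2}\,\dX = \mathcal{O}(N^{4})$, hence $\int B^{2}\,\dX = \mathcal{O}(a^{4})$. The higher-order terms $3 \leq n \leq m$ and the Lagrange remainder are controlled by analogous moment bounds $\int B^{n}\,\dX = \mathcal{O}(a^{2n})$ obtained from the correlation functions, and under $a = o(1/N^{m+2})$ all fall within $\mathcal{O}(N^{m}a^{3})$. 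Collecting all contributions and multiplying by the prefactor gives the stated expansion.

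The main obstacle is the careful asymptotic analysis of the $s$-dependent correlation-function integrals near the edge at $\theta = 0$: although $\int S_N(1,\theta)\,\dX = N^{2}/2$ is clean and exact, controlling its $s$-dependent correction (and the analogous $2$-point contribution) to the required precision requires the indicated change of variables and tight bookkeeping of errors coming from both the edge ($\theta \sim 1/N$) and bulk ($\theta \gtrsim 1/N$) regimes of the eigenvalue distribution.
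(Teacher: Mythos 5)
Your overall strategy is sound and genuinely parallels the paper's, with a mild but genuine organizational difference: you split $1+(s-1)T = A+B$ with $A=1+N(s-1)/s$ deterministic and $B$ random, and Taylor-expand $(A+B)^K$ directly in $B$, whereas the paper first Taylor-expands $(1+x)^K$ in $x=(s-1)T$ (Proposition~\ref{propbound}) and then, inside each moment $h(n)=\int x^n\,d\mu$, performs a multinomial expansion separating the deterministic $N/s$ from the random part $\tfrac{s^2-1}{s}S_N$. The two organizations are essentially equivalent and the core analytic work — bounding correlation-function integrals involving $1/((1-s)^2+4s\sin^2(\theta/2))^{l}$ — is the same in both. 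Your prefactor expansion $(x/(1-e^{-x}))^K = 1 + Ka/(2N) + K(3K-1)a^2/(24N^2) + \mathcal{O}(a^3/N^3)$, your identification of the target for $J$, and your use of the exact identity $\int_0^\pi R_1(\theta)/\sin^2(\theta/2)\,d\theta = 2N^2$ all match the paper.

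However, several of your quantitative intermediate claims are too strong and are not in fact true, even though the weaker correct bounds still suffice for the $\mathcal{O}(N^m a^3)$ target. First, $\int_{SO(2N+1)} S_N(s,\theta)\,\dX = N^2/2 + \mathcal{O}(N^2 a)$, not $+\mathcal{O}(Na)$: the correction splits as $N^2/(2s)$ plus a term bounded by $(1-s)^2\cdot N^3\cdot \pi/(1-s^2)\asymp (1-s)N^3\asymp N^2 a$, and the $N^2/(2s)=N^2/2+\mathcal{O}(Na)$ shift is the smaller of the two. Second, and more seriously, $\int S_N^2\,\dX = \mathcal{O}(N^4)$ is false: the diagonal contribution $\int_0^\pi R_1(\theta)\,f(\theta)^2\,d\theta$ (with $f(\theta)=1/((1-s)^2+4s\sin^2(\theta/2))$) is already of order $N^3/(1-s)\asymp N^4/a$, and the $(S_{2N}(\theta_1-\theta_2)-S_{2N}(\theta_1+\theta_2))^2$ part of $R_2$ contributes up to order $N^4/(1-s)\asymp N^5/a$. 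So $\int S_N^2\,\dX = \mathcal{O}(N^4/(1-s)) + \mathcal{O}(N^3/(1-s))$, which under $a=o(1/N^{m+2})$ is enormously larger than $N^4$. Consequently $\int B^2\,\dX = \mathcal{O}(a^4 N^2) + \mathcal{O}(a^3 N)$, not $\mathcal{O}(a^4)$, and similarly for $n\ge 3$ the bound is $\int B^n\,\dX = \mathcal{O}(N^n a^4) + \mathcal{O}(N^{n-1}a^3)$, not $\mathcal{O}(a^{2n})$. These weaker bounds do still fall inside $\mathcal{O}(N^m a^3)$ for $2\le n\le m+1$ (using $Na\le 1$ for the remainder term), so your proof is repairable, but as written the asserted moment bounds are incorrect and would not survive the promised computation with the correlation functions. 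You should replace them with bounds of the form the paper derives (Theorem~\ref{bothcases}: $g(l,s,N)\ll N^4(1-s)^4/(1-s)^{2n-2p} + N^2(1-s)^3/(1-s)^{2n-2p}$), and then track how the prefactor $((1-s)^2(1+s)/s)^n\asymp (2a^2/N^2)^n$ scales them into the error budget.
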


\section{Proof of Theorem \ref{absthm}}
The proof of Theorem \ref{absthm} for general moments begins by first expanding the moment using Weyl's integration formula and pulling out the $\tfrac{-1}{1-s}$ term which is the contribution from the eigenvalue at 1. Let $K$ be a positive real and $s\in \mathbb{R}$ such that $0\leq s<1$. Then,
\begin{eqnarray}
&&\int_{SO(2N+1)} \left(\frac{\Lambda_X^{'}}{\Lambda_X}(s)\right)^K \dX\\
&& = \frac{2^{N^2}}{\pi^N N!} \int_{[0,\pi]^N} \left( \frac{-1}{1-s} + \sum_{m=1}^N \frac{(2s-2\cos\theta_m)}{1+s^2-2s\cos\theta_m} \right)^K \Delta^2(\cos\theta) \prod_{m=1}^N \sin^2(\theta_m/2) \dx{\theta}. \label{eqa1}
\end{eqnarray}
We factor out $\left(\frac{-1}{1-s}\right)^K$ and now focus on the integral
\begin{equation}
\int_{[0, \pi]^N}\left(1 + (s-1)\sum_{m=1}^N \frac{2s-2\cos\theta_m}{1+s^2 - 2s\cos\theta_m}\right)^K \Delta^2(\cos\theta) \prod_{m=1}^N \sin^2(\theta_m/2)d\theta.
\end{equation}

In the following Lemma, we show that we can approximate the multinomial sum inside the integral by a polynomial. 
\begin{lemma}\label{taylorlem}
Let $m \in \N, K \in \R$ with $\lfloor K \rfloor \leq m$. Then, for all $\frac{-1}{2} \leq x$,

\begin{equation}
    \left|(1+x)^K - \sum_{n=0}^{m} \binom{K}{n} x^n \right| \leq C_{K,m} |x|^{m+1},
\end{equation}

where $C_{K,m}>0$ is a constant which depends only on $K$ and $m$.
\end{lemma}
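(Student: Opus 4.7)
The plan is to split the domain $[-1/2, \infty)$ into the compact piece $[-1/2, 1]$, where the Taylor expansion of $(1+x)^K$ is valid, and the tail $(1, \infty)$, where I control each side by a direct size estimate. On each piece I will produce a constant depending only on $K$ and $m$, and take $C_{K,m}$ to be their maximum.

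On $[-1/2,1]$, the function $f(x)=(1+x)^K$ is $C^\infty$ with $f^{(n)}(x)=n!\binom{K}{n}(1+x)^{K-n}$, so I would apply Taylor's theorem with the Lagrange form of the remainder at $0$ to obtain
\begin{equation}
(1+x)^K - \sum_{n=0}^{m}\binom{K}{n}x^n = \binom{K}{m+1}(1+\xi)^{K-m-1}\,x^{m+1}
\end{equation}
for some $\xi$ strictly between $0$ and $x$. Since $1+\xi\in[1/2,2]$ throughout this range, the factor $(1+\xi)^{K-m-1}$ is bounded above by $\max(2^{K-m-1},2^{m+1-K})$, yielding a bound of the form $C_1\,|x|^{m+1}$ with $C_1$ depending only on $K$ and $m$.

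For the tail $x>1$ the Taylor series need not converge, so I would instead estimate each side separately. Here the hypothesis $\lfloor K\rfloor\le m$, equivalent to $K<m+1$, becomes essential: combined with $x\ge 1$ it gives $(1+x)^K\le(2x)^K = 2^K x^K\le 2^K x^{m+1}$. The polynomial satisfies $\bigl|\sum_{n=0}^{m}\binom{K}{n}x^n\bigr|\le \bigl(\sum_{n=0}^{m}|\binom{K}{n}|\bigr)x^m\le \bigl(\sum_{n=0}^{m}|\binom{K}{n}|\bigr)x^{m+1}$, and the triangle inequality furnishes a bound of the required form with a constant $C_2$ depending only on $K$ and $m$. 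Setting $C_{K,m}:=\max(C_1,C_2)$ completes the argument.

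The main obstacle is precisely the global nature of the bound in $x$: Taylor's theorem is useless on $x>1$, since the Maclaurin series of $(1+x)^K$ has radius of convergence only $1$, so one cannot hope to control the remainder by a single power of $|x|$ via differentiation alone. The hypothesis $\lfloor K\rfloor\le m$ is what rescues the direct estimate on the tail; without it, $(1+x)^K$ would eventually grow faster than $x^{m+1}$ and no bound of the claimed form could hold.
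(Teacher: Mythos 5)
Your proof is correct, but it is more complicated than the paper's, and the reason you give for the extra complication — that Taylor's theorem cannot be used for $x>1$ because the Maclaurin series of $(1+x)^K$ has radius of convergence $1$ — is a misconception worth dispelling. Taylor's theorem with remainder (Lagrange or integral form) makes a statement about the \emph{finite} polynomial $P_m(x)=\sum_{n=0}^m\binom{K}{n}x^n$ plus a remainder term; it holds at every $x$ where $g(x)=(1+x)^K$ is $(m+1)$-times differentiable, which is all of $(-1,\infty)$, and it has nothing to do with whether $P_m(x)\to g(x)$ as $m\to\infty$. The paper simply applies this globally: writing the remainder in integral form and bounding it by Taylor's inequality gives
\begin{equation}
\left|g(x)-\sum_{n=0}^m\binom{K}{n}x^n\right|\le\frac{\lVert g^{(m+1)}\rVert_{\infty,[-1/2,\infty)}}{(m+1)!}\,|x|^{m+1},
\end{equation}
and the hypothesis $\lfloor K\rfloor\le m$, i.e.\ $K<m+1$, is exactly what makes $g^{(m+1)}(x)=(m+1)!\binom{K}{m+1}(1+x)^{K-m-1}$ bounded on $[-1/2,\infty)$ — the exponent $K-m-1$ is negative, so the supremum is attained at $x=-1/2$ and equals $(m+1)!\,|\binom{K}{m+1}|\,2^{m+1-K}$. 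In fact your own Lagrange-remainder computation already shows this: for $x>1$ and $\xi\in(0,x)$ the factor $(1+\xi)^{K-m-1}$ is at most $1$, so the remainder is bounded by $|\binom{K}{m+1}|\,|x|^{m+1}$ without any need to leave Taylor's theorem. Your split of the domain and direct size estimate on the tail thus produce a correct but redundant constant $C_2$; the paper obtains a single explicit constant $C_{K,m}=|\binom{K}{m+1}|\,2^{m+1-K}$ in one step. Both proofs use the hypothesis in the same essential way, just phrased differently: you use $K<m+1$ to compare $(1+x)^K$ with $x^{m+1}$ on the tail, while the paper uses it to make the $(m+1)$-st derivative bounded globally.
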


\begin{proof}
Let $g(x) = (1+x)^K$. By Taylor's theorem, we can truncate the multinomial expansion of $g(x)$ into a sum and an integral remainder:
\begin{equation}
g(x) = \sum_{n=0}^{m} \binom{K}{n} x^n + \int_0^{x} \frac{g^{(m+1)}(t)(x-t)^{m}}{m !} dt.
\end{equation}

Next, we can bound the integral remainder using Taylor's inequality which tells us that for $\lfloor K \rfloor \leq m$, 
\begin{equation}
\left|g(x) - \sum_{n=0}^{m} \binom{K}{n} x^n\right| \leq ||g^{(m+1)}||_{\infty, [-1/2, +\infty)} \frac{|x|^{m+1}}{(m+1)!},
\end{equation}
where 

\begin{equation}
\frac{||g^{(m+1)}||_{\infty, [-1/2, +\infty)}}{(m+1)!} = \left|\binom{K}{m+1}\right| \frac{1}{(1 - 1/2)^{m+1 - K}}.
\end{equation}

We note that the condition $\lfloor K \rfloor \leq m$, ensures that $g^{(m+1)}$ is bounded on $[-1/2, +\infty)$.

Therefore, let

\begin{equation}
C_{K, m} =\left|\binom{K}{m+1}\right| 2^{m+1 - K},
\end{equation} and we have proven Lemma \ref{taylorlem}.
\end{proof}

The next Lemma allows us to bound the sum in the logarithmic derivative, so we can apply the previous lemma to our moment calculation. It relies on the theory of homographic functions. 

\begin{lemma}\label{boundinglemma}
Let $0 \leq s<1$. Then,
\begin{equation}
\frac{2(s-1)}{s+1}N\leq (s-1)\sum_{j=1}^N \frac{2s-2\cos\theta_j}{1+s^2 - 2s\cos\theta_j}\leq 2N\label{2ineq}
\end{equation} where $\theta_j \in [0,2\pi]$ and $N \in \N$. 
\end{lemma}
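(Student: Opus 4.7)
The plan is to treat each summand as a homographic function of the single variable $u = \cos\theta_j$ and use the monotonicity property recalled in \eqref{homographic}. Fix $s \in [0,1)$ and define
\begin{equation}
h(u) = \frac{2s - 2u}{1 + s^2 - 2su} = \frac{-2u + 2s}{-2su + (1+s^2)},
\end{equation}
so that the $j$-th summand equals $h(\cos\theta_j)$. This is of the form \eqref{homographic} with $a=-2$, $b=2s$, $c=-2s$, $d=1+s^2$, giving
\begin{equation}
ad - bc = -2(1+s^2) - (2s)(-2s) = -2(1-s^2) < 0.
\end{equation}
Hence $h$ is strictly decreasing on each of its branches. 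The singularity is at $-d/c = (1+s^2)/(2s) \ge 1$ (by AM-GM, with equality only at $s=1$), so it lies outside $[-1,1]$. Thus $h$ is strictly decreasing on the whole interval $[-1,1]$ containing all possible values of $\cos\theta_j$.

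Next I would evaluate $h$ at the two endpoints. A direct computation gives
\begin{equation}
h(1) = \frac{2s - 2}{(1-s)^2} = \frac{-2}{1-s}, \qquad h(-1) = \frac{2s+2}{(1+s)^2} = \frac{2}{1+s}.
\end{equation}
By monotonicity, for every $\theta_j \in [0, 2\pi]$,
\begin{equation}
\frac{-2}{1-s} \le \frac{2s - 2\cos\theta_j}{1+s^2 - 2s\cos\theta_j} \le \frac{2}{1+s}.
\end{equation}

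Now multiply through by the factor $(s-1)$, which is negative (so the inequalities reverse): the left side becomes $(s-1)\cdot\frac{-2}{1-s} = 2$, while the right side becomes $\frac{2(s-1)}{1+s}$. Summing the resulting two-sided inequality over $j=1,\dots,N$ yields
\begin{equation}
\frac{2(s-1)}{s+1}N \le (s-1)\sum_{j=1}^N \frac{2s-2\cos\theta_j}{1+s^2 - 2s\cos\theta_j} \le 2N,
\end{equation}
which is exactly \eqref{2ineq}. There is no real obstacle here; the only delicate point is verifying that the pole of $h$ truly lies outside $[-1,1]$ (so that monotonicity applies on the entire domain rather than on two disconnected branches), and keeping track of the sign flip when multiplying by the negative factor $s-1$.
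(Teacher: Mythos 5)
Your proof is correct and follows essentially the same path as the paper's: identify each summand as a decreasing homographic function of $\cos\theta_j$ on $[-1,1]$ (via $ad-bc<0$ and the pole lying outside the interval), evaluate at the endpoints, flip the inequalities upon multiplying by the negative factor $s-1$, and sum over $j$. The two arguments are identical in substance, differing only in notation.
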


\begin{proof} For $0 \leq s < 1$, define
\begin{equation}
f(x) := \frac{2s - 2x}{1+s^2 - 2sx}
\end{equation} for $x\in [-1, 1]$.

Note that $1+s^2 -2sx\geq 1+s^2 - 2s = (1 - s)^2$, therefore for $s <1$,  $f(x)$ is well defined and is a homographic function as in \eqref{homographic} with $a = -2, b= 2s, c = -2s$ and $d = 1 + s^2$. The singularity at $\tfrac{-d}{c}$ is given by $\frac{1+s^2}{2s} > 1$ and $ad - bc $ is given by $2(s^2 - 1)< 0$, therefore $f$ is a decreasing homographic function on the interval $[-1,1]$ and so,

\begin{equation}\label{ineq2}
f(1) = \frac{2s-2}{1+s^2 - 2s}\leq f(x)\leq \frac{2s+2}{1+s^2 +2s} = f(-1)
\end{equation}
holds for all $x\in [-1, 1]$.
Equation \eqref{ineq2} implies
\begin{equation}
\frac{2}{s-1}\leq f(x) \leq \frac{2}{s+1}.
\end{equation} Multiplying all terms by $(s-1)$, which is negative given the range of $s$, inverts the inequalities and finally we have that
\begin{equation}
\frac{2(s-1)}{s+1}N\leq (s-1)\sum_{j=1}^N \frac{2s-2\cos(\theta_j)}{1+s^2 - 2s\cos(\theta_j)}\leq 2N.
\end{equation}
\end{proof}

The following proposition simply applies the two lemmas above to our moment computation. 

\begin{prop}\label{propbound}
For $K \in \mathbb{R}_{+}$, pick $0 < s<1$ and $m$ such that $\frac{-1}{2} \leq \frac{2(s-1)}{s+1}N$ and $\lfloor K \rfloor \leq m$. For simplicity, let $m+1$ be even as well. Then,

\begin{equation}
\abs{(s-1)^K\int_{SO(2N+1)} \left(\frac{\Lambda_X'}{\Lambda_X}(s)\right)^K \dX -\left[\sum_{n = 0}^{m} \binom{K}{n}h(n, s, N)\right]}\leq C_{K,m} h(m+1, s, N)
\end{equation}

where 

\begin{equation}\label{hfunc}
h(n, s, N) :=\frac{2^{N^2}}{\pi^N N!}\int_{[0, \pi]^N}\left((s-1)\sum_{j=1}^N \frac{2s-2\cos(\theta_j)}{1+s^2 - 2s\cos(\theta_j)}\right)^n \Delta(\cos(\theta_j))^2 \prod_{i=1}^N \sin^2(\theta_j/2)d\theta
\end{equation} with $n\in \N$, and
\begin{equation}
    C_{K, m} =\left|\binom{K}{m+1}\right| 2^{m+1 - k}.
\end{equation}

\end{prop}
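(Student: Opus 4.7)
The plan is to apply Lemma \ref{taylorlem} pointwise to the integrand of the Weyl representation, using Lemma \ref{boundinglemma} to verify its hypothesis, and then to integrate against the non-negative Weyl density. Starting from \eqref{eqa1}, I would factor $\frac{-1}{1-s}$ out of each of the $K$ copies of $\frac{\Lambda_X'}{\Lambda_X}(s)$ appearing inside the $K$-th power, obtaining
\begin{equation*}
\int_{SO(2N+1)}\left(\frac{\Lambda_X'}{\Lambda_X}(s)\right)^K \dX = \left(\frac{-1}{1-s}\right)^K \frac{2^{N^2}}{\pi^N N!}\int_{[0,\pi]^N}(1+x(\theta))^K\,\Delta^2(\cos\theta)\prod_{m=1}^N\sin^2(\theta_m/2)\,d\theta,
\end{equation*}
where $x(\theta):=(s-1)\sum_{j=1}^N\frac{2s-2\cos\theta_j}{1+s^2-2s\cos\theta_j}$. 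Multiplying through by $(s-1)^K$ then cancels the prefactor under the branch convention used throughout the paper.

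Next I would verify the hypothesis of Lemma \ref{taylorlem} pointwise. By Lemma \ref{boundinglemma}, for every $\theta\in[0,\pi]^N$ one has $x(\theta)\geq\frac{2(s-1)}{s+1}N\geq -\tfrac{1}{2}$, the last inequality being exactly the standing hypothesis of the proposition. Lemma \ref{taylorlem} therefore gives the pointwise inequality
\begin{equation*}
\left|(1+x(\theta))^K-\sum_{n=0}^m\binom{K}{n}x(\theta)^n\right|\leq C_{K,m}\,|x(\theta)|^{m+1}.
\end{equation*}
Multiplying by the non-negative Weyl density $\tfrac{2^{N^2}}{\pi^N N!}\Delta^2(\cos\theta)\prod_m\sin^2(\theta_m/2)$, integrating over $[0,\pi]^N$, and pulling the absolute value outside via the triangle inequality, the left-hand side becomes exactly $\bigl|(s-1)^K\int(\Lambda_X'/\Lambda_X)^K \dX - \sum_{n=0}^m\binom{K}{n}h(n,s,N)\bigr|$ by linearity and the definition \eqref{hfunc}. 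On the right-hand side, the parity assumption that $m+1$ is even turns $|x(\theta)|^{m+1}$ into $x(\theta)^{m+1}$, so the remaining integral equals $h(m+1,s,N)$, yielding the claimed bound.

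This proposition is essentially a packaging of the two preceding lemmas, so I do not foresee any genuine obstacle. The only two points requiring care are the branch convention that makes $(s-1)^K\cdot(-1/(1-s))^K$ collapse to $1$, and the use of the parity of $m+1$ to replace $|x(\theta)|^{m+1}$ by the signed quantity $x(\theta)^{m+1}$ so that the error integral matches the definition of $h(m+1,s,N)$ in \eqref{hfunc}.
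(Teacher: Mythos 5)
Your proof is correct and follows essentially the same route as the paper: rewrite both the moment and the $h(n,s,N)$ terms as a single Weyl integral, apply Lemma \ref{taylorlem} pointwise to the integrand with $x = (s-1)\sum_j \frac{2s-2\cos\theta_j}{1+s^2-2s\cos\theta_j}$, and use the parity of $m+1$ to match the remainder to $h(m+1,s,N)$. Your explicit invocation of Lemma \ref{boundinglemma} to verify that $x(\theta) \geq -1/2$ before applying Lemma \ref{taylorlem} is a welcome small addition that the paper's proof leaves implicit.
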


\begin{proof}
\begin{eqnarray}
&& \abs{(s-1)^K\int_{SO(2N+1)} \left(\frac{\Lambda_X'}{\Lambda_X}(s)\right)^K \dX -\left[\sum_{n = 0}^{m} \binom{K}{n}h(n, s, N)\right]} \\
&& = \left| \frac{2^{N^2}}{\pi^N N!}\int_{[0, \pi]^N}\left(1 + (s-1)\sum_{m=1}^N \frac{2s-2\cos(\theta_m)}{1+s^2 - 2s\cos(\theta_m)}\right)^K \Delta(\cos(\theta_m))^2 \prod_{m=1}^N \sin^2(\theta_m/2)d\theta \right. \\
&&  \left.- \sum_{n=0}^m \binom{K}{n} \frac{2^{N^2}}{\pi^N N!}\int_{[0, \pi]^N}\left((s-1)\sum_{j=1}^N \frac{2s-2\cos(\theta_j)}{1+s^2 - 2s\cos(\theta_j)}\right)^n \Delta(\cos(\theta_j))^2 \prod_{i=1}^N \sin^2(\theta_j/2)d\theta \right| \nonumber \\
&& =  \left|\frac{2^{N^2}}{\pi^N N!}\int_{[0, \pi]^N}\left[\left(1 + (s-1)\sum_{m=1}^N \frac{2s-2\cos(\theta_m)}{1+s^2 - 2s\cos(\theta_m)}\right)^K \nonumber \right. \right. \\
&& \left. \left. - \sum_{n=0}^m \binom{K}{n}\left((s-1)\sum_{j=1}^N \frac{2s-2\cos(\theta_j)}{1+s^2 - 2s\cos(\theta_j)}\right)^n \right] \Delta(\cos(\theta_j))^2 \prod_{i=1}^N \sin^2(\theta_j/2)d\theta \right| \nonumber \\
&& \leq \frac{2^{N^2}}{\pi^N N!}\int_{[0, \pi]^N}\left|\left(1 + (s-1)\sum_{m=1}^N \frac{2s-2\cos(\theta_m)}{1+s^2 - 2s\cos(\theta_m)}\right)^K \right. \nonumber \\
&& \left. - \sum_{n=0}^m \binom{K}{n}\left((s-1)\sum_{j=1}^N \frac{2s-2\cos(\theta_j)}{1+s^2 - 2s\cos(\theta_j)}\right)^n \right| \Delta(\cos(\theta_j))^2 \prod_{i=1}^N \sin^2(\theta_j/2)d\theta.    \label{prop1ineq}
\end{eqnarray} 

Now we apply Lemma \ref{taylorlem} with 
$$ x = (s-1)\sum_{m=1}^N \frac{2s-2\cos(\theta_m)}{1+s^2 - 2s\cos(\theta_m)} $$ which yields the result. Note that since $m+1$ is even, we may drop the absolute value on $x$ on the right hand side. 
\end{proof}

The calculations to this point have been evaluating the logarithmic derivative at a point $s$; if we wish to let $s$ depend on $N$, as in \cite{AS}, we may take $s = e^{-\alpha} = e^{-a/N}$ where $a = o(1)$ as $N$ goes to infinity with $a>0$, and compute the asymptotic behaviour of $h(n, s, N)$ for each fixed $n$.

The following lemma allows us to rewrite $h(n,s,N)$ using a polynomial change of basis, which allows us to use the asymptotic expansion of Theorem \ref{thm4}. 

\begin{lemma}\label{polyexp}
Consider a polynomial $X^n$ expanded in the basis $((1 + X)^k)_{k\leq n}$ with coefficients $a_{k, n}$:

\begin{equation}
X^n = \sum_{k=0}^n a_{k, n} (1+X)^k.
\label{eq2}
\end{equation} Then,
\begin{equation}
    \sum_{k=0}^n a_{k, n} =  \delta_{0, n} \label{delta0}
\end{equation} and 
\begin{equation}
 \sum_{k=0}^n a_{k, n} k = \delta_{1, n} \label{delta1}.
\end{equation}
\begin{proof}
Equation \eqref{delta0} follows simply by taking $X = 0$, and \eqref{delta1} follows by differentiating with respect to $X$, then setting $X = 0$.
\end{proof}
\end{lemma}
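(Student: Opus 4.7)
The plan is to observe that the relation $X^n = \sum_{k=0}^n a_{k,n}(1+X)^k$ is an identity of polynomials in the formal variable $X$, so it survives any substitution $X \mapsto x_0$ and any number of differentiations on both sides. The two quantities to be extracted, $\sum_k a_{k,n}$ and $\sum_k k\, a_{k,n}$, are exactly what appear on the right-hand side when we evaluate (respectively) the identity and its first derivative at $X=0$, because the factors $(1+X)^k$ collapse to $1$ and $k(1+X)^{k-1}$ collapses to $k$. So the whole strategy is to evaluate at $X=0$ twice: once before differentiating and once after.

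For \eqref{delta0}, I would substitute $X=0$ directly into the identity. The right-hand side becomes $\sum_{k=0}^n a_{k,n}\cdot 1^k = \sum_{k=0}^n a_{k,n}$, while the left-hand side is $0^n$, which (with the convention $0^0 = 1$) equals $\delta_{0,n}$. For \eqref{delta1}, I would differentiate both sides in $X$ to get $nX^{n-1} = \sum_{k=0}^n k\, a_{k,n}(1+X)^{k-1}$, and then set $X=0$; the right-hand side becomes $\sum_{k=0}^n k\, a_{k,n}$ and the left-hand side becomes $n\cdot 0^{n-1}$, which equals $\delta_{1,n}$.

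There is essentially no obstacle, and in particular no need to compute the $a_{k,n}$ explicitly (though for completeness one can note that substituting $Y = 1+X$ and expanding $(Y-1)^n$ by the binomial theorem yields $a_{k,n} = \binom{n}{k}(-1)^{n-k}$, from which both identities also drop out). The only thing one has to be a little careful about is the convention $0^0 = 1$ used to make the cases $n=0$ in \eqref{delta0} and $n=1$ in \eqref{delta1} line up with the Kronecker symbols uniformly; if one dislikes that convention, those two edge cases can be checked by hand, since $X^0 = 1 = (1+X)^0$ forces $a_{0,0} = 1$, and $X = -1 + (1+X)$ gives $a_{0,1} = -1$, $a_{1,1}=1$, so $\sum_k a_{k,0} = 1$ and $\sum_k k\, a_{k,1} = 1$ directly.
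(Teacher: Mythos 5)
Your proof is correct and uses exactly the same approach as the paper: evaluate the polynomial identity at $X=0$ for \eqref{delta0}, and differentiate once before evaluating at $X=0$ for \eqref{delta1}. The extra care you take with the $0^0$ convention and the explicit binomial form of $a_{k,n}$ is nice but not strictly needed; the paper leaves those details implicit.
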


Now, the condition on Proposition \eqref{propbound} we need to satisfy is $\frac{-1}{2} \leq \frac{2(s-1)}{(s+1)}N = \frac{2N(e^{-a/N} - 1)}{e^{-a/N} +1} \sim \frac{-2aN}{2N-a} \sim -a$ which is met if $a=o(1)$.

\begin{prop}\label{propintegercase}
With $h(n, s, N)$ defined as in \eqref{hfunc} and $a = o(1)$ as $N \to \infty$ as before,

\begin{equation}
h(n, e^{-a/N}, N) = \left\{
    \begin{array}{ll}
        1 + \mathcal{O}(a/N) + \mathcal{O}(a^2)& \mbox{if } n =0 \\
        -a + \mathcal{O}(a/N) + \mathcal{O}(a^2) & \mbox{if } n =1 \\
        \mathcal{O}(a/N) + \mathcal{O}(a^2) & \mbox{otherwise.}
    \end{array}
\right.
\end{equation}
\end{prop}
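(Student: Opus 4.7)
The plan is to reduce $h(n,s,N)$ to the integer-moment calculation of Theorem \ref{thm4} by means of the basis change of Lemma \ref{polyexp}. Writing $x := (s-1)\sum_{j=1}^N \frac{2s-2\cos\theta_j}{1+s^2-2s\cos\theta_j}$, the factorization in \eqref{SO2N+1logder} produces the key identity $1+x = (s-1)\,\frac{\Lambda_X'}{\Lambda_X}(s)$. Consequently, integrating $(1+x)^k$ against the Weyl density equals $(s-1)^k$ times the $k$-th moment of the logarithmic derivative, which for $k \in \mathbb{N}$ is controlled by Theorem \ref{thm4}.

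I would then substitute $s = e^{-a/N}$, write $(s-1)^k = (-a/N)^k(1 + \mathcal{O}(a/N))$, and read off the asymptotics of the integer moment. Multiplication of $(-a/N)^k$ against the two explicit terms $(-1)^k(N/a)^k$ and $-(-1)^k\, k\, N^k/a^{k-1}$ from Theorem \ref{thm4} produces $1$ and $-k a$ respectively, while the two error terms of orders $N^{k-1}/a^{k-1}$ and $N^k/a^{k-2}$ collapse to $\mathcal{O}(a/N)$ and $\mathcal{O}(a^2)$ after the prefactor is absorbed. The case $k=0$ is the trivial normalization $\int 1\,d\mu = 1$, consistent with the same formula. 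Hence for every fixed integer $k \geq 0$,
\begin{equation*}
\frac{2^{N^2}}{\pi^N N!}\int_{[0,\pi]^N} (1+x)^k\, \Delta^2(\cos\theta)\prod_m \sin^2(\theta_m/2)\,d\theta \;=\; 1 - k a + \mathcal{O}(a/N) + \mathcal{O}(a^2).
\end{equation*}

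To finish, I would use Lemma \ref{polyexp} to expand $x^n = \sum_{k=0}^n a_{k,n}(1+x)^k$, integrate termwise against the Weyl measure, and collect coefficients using the two identities $\sum_k a_{k,n} = \delta_{0,n}$ and $\sum_k k\, a_{k,n} = \delta_{1,n}$. The constant piece then contributes $\delta_{0,n}$, the piece linear in $a$ contributes $-a\,\delta_{1,n}$, and the error terms are absorbed into $\mathcal{O}(a/N) + \mathcal{O}(a^2)$ with implicit constants depending on the fixed integer $n$ only through $\sum_k |a_{k,n}|$. This immediately produces the three cases $n=0$, $n=1$, and $n\geq 2$ of the proposition.

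The work here is essentially bookkeeping rather than a genuine obstacle; the single point to verify with care is that the mixed contribution $(-k a)\cdot \mathcal{O}(a/N)$ arising from expanding $(1+\mathcal{O}(a/N))^k$ is of size $\mathcal{O}(a^2/N)$, and hence is safely absorbed into the stated $\mathcal{O}(a/N) + \mathcal{O}(a^2)$ error.
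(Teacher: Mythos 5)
Your proposal is correct and follows essentially the same route as the paper: factor out the leading power of $(s-1)$ to rewrite the integer moment in the form $\tfrac{2^{N^2}}{\pi^N N!}\int(1+x)^k\,\Delta^2\prod\sin^2 d\theta = 1-ka+\mathcal{O}(a/N)+\mathcal{O}(a^2)$ via Theorem \ref{thm4}, then expand $x^n$ in the basis $((1+x)^k)_k$ using Lemma \ref{polyexp} and collapse the sum with the two delta identities. Your extra bookkeeping on the absorbed cross-term $(-ka)\cdot\mathcal{O}(a/N)=\mathcal{O}(a^2/N)$ is a detail the paper leaves implicit but is indeed the right thing to check.
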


\begin{proof}

Factoring out $\left(\frac{1}{e^{-a/N}-1}\right)^n \sim \left( \frac{-N}{a}\right)^n$ from Theorem \ref{thm4}, we have

\begin{eqnarray}
&&\frac{2^{N^2}}{\pi^N N!}\int_{[0, \pi]^N}\left(1+(e^{-a/N}-1)\sum_{j=1}^N \frac{2e^{-a/N}-2\cos(\theta_j)}{1+e^{-2a/N} - 2e^{-a/N}\cos(\theta_j)}\right)^n \Delta(\cos(\theta_j))^2 \prod_{i=1}^N \sin^2(\theta_j/2)d\theta \nonumber \\
&&= 1 -n a + \mathcal{O}(a/N) + \mathcal{O}(a^2). \label{asymexp}
\end{eqnarray}

Therefore, applying Lemma \ref{polyexp}, $h(n, e^{-a/N}, N)$ equals

\begin{eqnarray}
&& \quad  \quad  \frac{2^{N^2}}{\pi^N N!}\int_{[0, \pi]^N}\left((e^{-a/N}-1)\sum_{j=1}^N \frac{2e^{-a/N}-2\cos(\theta_j)}{1+e^{-2a/N} - 2e^{-a/N}\cos(\theta_j)}\right)^n \Delta(\cos(\theta_j))^2 \prod_{i=1}^N \sin^2(\theta_j/2)d\theta   \label{prop2sum}  \\
&& =\frac{2^{N^2}}{\pi^N N!}\int_{[0, \pi]^N} \sum_{k=0}^n a_{k, n} \left(1 + (e^{-a/N}-1)\sum_{j=1}^N \frac{2e^{-a/N}-2\cos(\theta_j)}{1+e^{-2a/N} - 2e^{-a/N}\cos(\theta_j)}\right)^k \Delta(\cos(\theta_j))^2 \prod_{i=1}^N \sin^2(\theta_j/2)d\theta\nonumber\\
&& =\sum_{k=0}^n a_{k, n} \frac{2^{N^2}}{\pi^N N!}\int_{[0, \pi]^N}  \left(1 + (e^{-a/N}-1)\sum_{j=1}^N \frac{2e^{-a/N}-2\cos(\theta_j)}{1+e^{-2a/N} - 2e^{-a/N}\cos(\theta_j)}\right)^k \Delta(\cos(\theta_j))^2 \prod_{i=1}^N \sin^2(\theta_j/2)d\theta. \nonumber 
\end{eqnarray}
Now, taking the asymptotic expansion as in \eqref{asymexp}, \eqref{prop2sum} equals
\begin{eqnarray}
&& \sum_{k=0}^n a_{k, n} \left(1 -k a + \mathcal{O}(a/N) + \mathcal{O}(a^2)\right)\nonumber\\
&& =\sum_{k=0}^n a_{k, n} -a\sum_{k=0}^n a_{k, n} k + \mathcal{O}(a/N) + \mathcal{O}(a^2) \\
&& = \left\{
    \begin{array}{ll}
        1 + \mathcal{O}(a/N) + \mathcal{O}(a^2)& \mbox{if } n =0 \\
        -a + \mathcal{O}(a/N) + \mathcal{O}(a^2) & \mbox{if } n =1 \\
        \mathcal{O}(a/N) + \mathcal{O}(a^2) & \mbox{otherwise.}
    \end{array}
\right.
\end{eqnarray}
\end{proof}

We are now ready to prove the main result. 
\begin{customthm}{2}
Let $\Lambda_X(s)$ denote the characteristic polynomial of a matrix $X \in SO(2N+1)$, the group of odd orthogonal random matrices of odd dimension and determinant 1, equipped with the Haar measure $\dX$. Let $\alpha = a/N>0$ with $a \in \mathbb{R}_{+}$, $a = o(1)$ as $N \to \infty$, and the moment parameter $K \in \mathbb{R}_{+}$. Then, 
\begin{equation}
\left(\frac{-N}{a}\right)^K\int_{SO(2N+1)} \left(\frac{\Lambda_X'}{\Lambda_X}(e^{-\alpha})\right)^K \dX = 1 - Ka + \mathcal{O}(a/N) + \mathcal{O}(a^2).
\end{equation}
\end{customthm}

\begin{proof}
Fix an integer $m$ which satisfies three things: $m\geq \lfloor K \rfloor$, $m$ is odd and $m\geq 2$. By the last condition, we know from Proposition \ref{propintegercase}, that
\begin{equation}
    h(m+1, e^{-a/N}, N) = \mathcal{O}(a/N) + \mathcal{O}(a^2).
\end{equation}

By Proposition \ref{propbound}, we know that  

\begin{eqnarray}
&& (e^{-a/N}-1)^K\int_{SO(2N+1)} \left(\frac{\Lambda_X'}{\Lambda_X}(e^{-a/N})\right)^K \dX \nonumber \\ && = \sum_{n = 0}^{m} \binom{K}{n}h(n, e^{-a/N}, N) + \mathcal{O}\left( h(m+1, e^{-a/N}, N)\right) \\
&& = h(0,e^{-a/N}, N) + Kh(1, e^{-a/N}, N) + \mathcal{O}(a/N) + \mathcal{O}(a^2) \nonumber \\
&& = 1 - Ka + \mathcal{O}(a/N) + \mathcal{O}(a^2),
\end{eqnarray}
where the last line comes from applying Proposition \ref{propintegercase} again and 
\begin{equation}
   (e^{-a/N}-1)^K \sim \left(\frac{-N}{a}\right)^K 
\end{equation} as $N \to \infty$.
\end{proof}

\section{Proof of Theorem \ref{theorem2}}\label{last_section}

Let $f$ and $g$ be some functions. In this section, we will write $f \ll g$ or $f = \mathcal{O}(g)$ if there exists a constant $A>0$, independent of any variables like $s$, $N$ or $\theta$, such that $f\leq A g$. In the following we have many terms with size $N^p(1-s)^q$ for various $p$ and $q$. Note that since $s=e^{-a/N}$, for large $N$ we find that $1-s$ behaves like $a/N$, where we remember that $a$ goes to zero as $N$ goes to infinity. It is not always possible to determine which term dominates when $N$ is large until we impose a constraint on how fast $a$ goes to zero as $N\rightarrow \infty$. We will impose such a constraint at the end, to achieve a clean result in Theorem 3, but will carry the various terms through the calculation as it gives the intermediate results versatility.

We will use the $1-$ and $2$-point correlation functions of eigenvalues of matrices from $SO(2N+1)$. We recall the expression of the $n$-point correlation functions $R_n$:

\begin{equation}
R_n(\theta_1, ..., \theta_n) = \underset{n\times n}{\det}(S_{2N}(\theta_k - \theta_j) - S_{2N}(\theta_k + \theta_j)),
\end{equation}
where 

\begin{equation}
S_N(\theta) = \frac{1}{2\pi} \frac{\sin(N\theta/2)}{\sin(\theta/2)}.
\end{equation}

In particular, we have that

\begin{equation}
    R_1(\theta):=S_{2N}(0) - S_{2N}(2\theta) = \frac{N}{\pi} - \frac{\sin(2N\theta)}{2\pi \sin(\theta)}
\end{equation}
and
\begin{eqnarray}
R_2(\theta_1, \theta_2) &&= \begin{vmatrix}
S_{2N}(0) - S_{2N}(2\theta_1) && S_{2N}(\theta_1 - \theta_2) - S_{2N}(\theta_1 + \theta_2)\\
S_{2N}(\theta_1 - \theta_2) - S_{2N}(\theta_1 + \theta_2) && S_{2N}(0) - S_{2N}(2\theta_2)\\
\end{vmatrix}\\
&& = R_1(\theta_1)R_1(\theta_2) - (S_{2N}(\theta_1 - \theta_2) - S_{2N}(\theta_1 + \theta_2))^2. \label{R_2}
\end{eqnarray}

Finally, we recall the formula for averaging a function of the eigenvalues over $SO(2N+1)$, which explains why these functions will be useful:

\begin{equation}
    \int_{[0, \pi]^N} f(\theta_1, ..., \theta_n)d\mu(\theta) = \frac{(N-n)!}{N!}\int_{[0, \pi]^n} f(\theta_1, ..., \theta_n)R_n(\theta_1, ..., \theta_n)d\theta_1...d\theta_n,
\end{equation}
where $d\mu(\theta) = d\mu(\theta_1, ..., \theta_n) := \frac{2^{N^2}}{\pi^N N!}\Delta^2(\cos(\theta))\prod_{m=1}^N \sin^2(\theta_m/2)d\theta_1...d\theta_n$ and $f$ is a function of $n$ variables with $1\leq n\leq N$.

\bs
We recall the function $h(n, s, N)$, previously defined by

\begin{equation}\label{hint}
    h(n, s, N) = \int_{[0, \pi]^N}\left((s-1)\sum_{j=1}^N \frac{2s-2\cos(\theta_j)}{1+s^2 - 2s\cos(\theta_j)}\right)^n d\mu(\theta),
\end{equation}
where $n\in \N$. $h(n, s, N)$ is the key quantity to study asymptotically. This is our next goal.

We note that

\begin{equation}
\frac{2s-2\cos(\theta_j)}{1+s^2 - 2s\cos(\theta_j)} = \frac{1}{s}\left(1+  \frac{s^2-1}{1+s^2 - 2s\cos(\theta_j)}\right) = \frac{1}{s}\left(1+  \frac{s^2-1}{(1-s)^2 + 4s\sin^2(\theta_j/2)}\right).
\end{equation}

Then, we can rewrite \eqref{hint} as 

\begin{equation}
\frac{(s-1)^n}{s^n}\int_{[0, \pi]^N}\left(N + \sum_{j=1}^N \frac{s^2-1}{(1-s)^2 + 4s\sin^2(\theta_j/2)}\right)^n d\mu(\theta).
\end{equation}

We can expand the expression with the multinomial theorem to write

\begin{equation}
\frac{(s-1)^n}{s^n}\int_{[0, \pi]^N}\sum_{p + l_1 + ... + l_N = n}\binom{n}{p, l_1, ..., l_N}N^{p} \\\prod_{j=1}^N \left(\frac{s^2-1}{(1-s)^2 + 4s\sin^2(\theta_j/2)}\right)^{l_j} d\mu(\theta),
\end{equation}
where 

\begin{equation}
\binom{n}{p, l_1, ..., l_N} = \frac{n!}{p! l_1!...l_N!}.
\end{equation}

Finally, 

\begin{multline}\label{h}
h(n, s, N) = \frac{(s-1)^n}{s^n}\sum_{p + l_1 + ... + l_N = n}\binom{n}{p, l_1, ..., l_N}N^{p} (s^2-1)^{n-p} \\\int_{[0, \pi]^N}\prod_{j=1}^N \frac{1}{((1-s)^2 + 4s\sin^2(\theta_j/2))^{l_j}} d\mu(\theta).
\end{multline}

So, we have to understand the asymptotic behaviour of 

\begin{equation}\label{g}
    g(l, s, N) :=\\\int_{[0, \pi]^N}\prod_{j=1}^N \frac{1}{((1-s)^2 + 4s\sin^2(\theta_j/2))^{l_j}} d\mu(\theta),
\end{equation}
with $l = (l_1, ..., l_N)$ and $l_1 + ... + l_N = n - p$.

We will begin by studying $g$ in two cases (which can overlap):
\begin{itemize}
    \item at least one of the $l_j$'s is bigger than or equal to $2$ or
    \item at least two the $l_j$'s are equal to $1$.
\end{itemize}

\subsection{First case}

Here, we will assume that at least one of the $l_j$'s is bigger than or equal to $2$. Without loss of generality, we can assume that $l_1\geq 2$ since $g$ is invariant under a permutation of the components of $l$.

For all $j\geq 2$, we use this inequality

\begin{equation}
[(1-s)^2 + 4s\sin^2(\theta_j/2)]^{l_j} \geq [1-s]^{2l_j}
\end{equation}
in order to state that

\begin{equation}
g(l, s, N)\leq \frac{1}{(1-s)^{2\sum_{j\geq 2}l_j}} \int_{[0, \pi]^N} \frac{1}{((1-s)^2 + 4s\sin^2(\theta_1/2))^{l_1}} d\mu(\theta).
\end{equation}

Hence, 
\begin{equation}
g(l, s, N)\leq \frac{1}{(1-s)^{2n - 2(p + l_1)}} \int_{[0, \pi]^N} \frac{1}{((1-s)^2 + 4s\sin^2(\theta_1/2))^{l_1}} d\mu(\theta)
\end{equation}
because $\sum_{j\geq 2}l_j = n - p - l_1$.

Using the $1$-point correlation function, which is positive, we can rewrite this above integral and so we have that

\begin{equation}\label{inequality_g}
    g(l, s, N)\leq \frac{1}{N(1-s)^{2n - 2(p + l_1)}}\int_{0}^{\pi} \frac{R_1(\theta)}{((1-s)^2 + 4s\sin^2(\theta/2))^{l_1}}d\theta.
\end{equation}

We want to bound $R_1$ by something which keeps track of the behaviour of $R_1$ close to $0$. In other words, we want to keep in mind the fact that the $\theta_n$'s are repelled by $1$.

\begin{lemma}\label{boundingR_1}

Let $\theta\in [-\pi, \pi]$. We have that

\begin{equation}
    R_1(\theta) = \left(\frac{N}{\pi} - \frac{\sin(2N\theta)}{2\pi \sin(\theta)}\right)\ll N^3 \sin^2(\theta/2),
\end{equation}
where the constant behind $\ll$ doesn't depend on $\theta$ and on $N$.

\end{lemma}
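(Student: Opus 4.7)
The function $R_1$ is even and $2\pi$-periodic, so it suffices to establish the bound on $\theta \in [0,\pi]$. I would rewrite
\begin{equation*}
R_1(\theta) = \frac{2N\sin\theta - \sin(2N\theta)}{2\pi\sin\theta}
\end{equation*}
and split $[0,\pi]$ into two regions: a small-$\theta$ regime $[0,\pi/2]$, where the numerator vanishes to third order at $\theta = 0$, and $[\pi/2,\pi]$, where both $\sin^2(\theta/2)$ and the denominator are bounded below away from $0$ by absolute constants.

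For the small-$\theta$ regime, set $f(\theta) := 2N\sin\theta - \sin(2N\theta)$. A direct computation gives $f(0)=f'(0)=f''(0)=0$, while $|f'''(\theta)| = |{-2N\cos\theta + 8N^3\cos(2N\theta)}|\leq 10N^3$ for $N\geq 1$, so Taylor's theorem with Lagrange remainder yields $|f(\theta)| \leq \tfrac{5}{3}N^3|\theta|^3$. Combining this with the elementary inequalities $\sin\theta \geq 2\theta/\pi$ on $[0,\pi/2]$ and $\sin(\theta/2) \geq \theta/\pi$ on $[0,\pi]$ gives $R_1(\theta) \leq \tfrac{5}{12}N^3\theta^2 \leq \tfrac{5\pi^2}{12}N^3\sin^2(\theta/2)$, which is the desired bound on $[0,\pi/2]$.

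For $\theta \in [\pi/2,\pi]$ I would use the standard inequality $|\sin(2N\theta)| \leq 2N|\sin\theta|$ (visible from $\sin(n\theta)/\sin\theta$ being a Chebyshev polynomial of the second kind evaluated at $\cos\theta$, or from a one-line induction on $N$) to conclude that $R_1(\theta) \leq \tfrac{2N}{\pi}$ uniformly in $\theta$ and $N$. Since $\sin^2(\theta/2) \geq \tfrac{1}{2}$ on this interval, we get $R_1(\theta) \leq \tfrac{4}{\pi}N^3\sin^2(\theta/2)$ for $N\geq 1$, and the lemma follows with constant $\max(\tfrac{5\pi^2}{12},\tfrac{4}{\pi})$. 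The one genuinely substantive step is the Taylor estimate for $f$: the vanishing of its first three Taylor coefficients at $0$ is precisely what absorbs the $N^3$ growth of $f'''$ and leaves a constant independent of $N$; everything else is a matter of plugging in textbook bounds on $\sin$.
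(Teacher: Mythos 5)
Your proof is correct, and it takes a genuinely different route from the paper. The paper works directly with $R_1$ rather than its numerator: it rewrites $R_1(\theta) = \frac{N}{\pi} - \frac{1}{2\pi}\sum_{k=0}^{2N-1}e^{i\theta(2k+1-2N)}$, checks $R_1(0)=R_1'(0)=0$, and observes that $\sup_t|R_1''(t)| \leq \frac{1}{2\pi}\sum_{k=0}^{2N-1}(2k+1-2N)^2 = \mathcal{O}(N^3)$; a single Taylor expansion with integral remainder on all of $[-\pi,\pi]$ then gives $|R_1(\theta)| \leq CN^3\theta^2$ in one step, and the comparison $\theta^2 \leq C'\sin^2(\theta/2)$ is obtained nonconstructively by continuity of $\theta^2/\sin^2(\theta/2)$ on the compact $[-\pi,\pi]$. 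Because the paper never isolates the $\frac{1}{\sin\theta}$ denominator, it avoids your domain split entirely. By contrast you Taylor-expand the numerator $f(\theta)=2N\sin\theta-\sin(2N\theta)$ to third order, which forces you to control $1/\sin\theta$ separately and hence to split into $[0,\pi/2]$ and $[\pi/2,\pi]$, using the classical inequality $|\sin(2N\theta)|\leq 2N|\sin\theta|$ on the outer piece. What your version buys is completely explicit constants at every stage (culminating in $\tfrac{5\pi^2}{12}$), whereas the paper's is shorter and avoids casework but leaves the constant implicit. Both rest on the same underlying fact — the second-order vanishing of $R_1$ at $0$ — so there is no gap in either; this is a legitimate alternative proof of the same strength.
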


\begin{proof}
Let's begin by noticing that 
\begin{equation}
    \sum_{k=0}^{N-1}e^{ik\theta} = \frac{\sin(N\theta/2)}{\sin(\theta/2)}e^{i\frac{\theta}{2}(N - 1)}
\end{equation}
and so

\begin{equation}
    \sum_{k=0}^{2N-1}e^{i2k\theta} = \frac{\sin(2N\theta)}{\sin(\theta)}e^{i\theta(2N - 1)}.
\end{equation}

Hence, we can rewrite the $1$-point correlation function in the following way:

\begin{equation}\label{R_1}
    R_1(\theta):=\frac{N}{\pi} - \frac{\sin(2N\theta)}{2\pi \sin(\theta)} =  \frac{N}{\pi} - \frac{1}{2\pi}\sum_{k=0}^{2N-1} e^{i\theta(2k+1-2N)}.
\end{equation}

Once again, we use Taylor's formula:

\begin{equation}
    R_1(\theta) = R_1(0) + R_1'(0)\theta + \int_{0}^{\theta} R_1''(t)tdt,
\end{equation}

and $R_1(0) = 0$ and $R_1'(0) = 0$. Indeed,

\begin{equation}
R_1'(0) = -\frac{i}{2\pi}\sum_{k=0}^{2N-1} (2k+1-2N) = -\frac{i}{2\pi}\left(2\sum_{k=0}^{2N-1}k + (1-2N)2N\right) = 0.
\end{equation}

So, we have

\begin{equation}
    R_1(\theta) = \int_{0}^{\theta} R_1''(t)(\theta-t)tdt.
\end{equation}

We notice that

\begin{equation}
    |R_1''(t)| = \left|\frac{1}{2\pi}\sum_{k=0}^{2N-1} (2k+1-2N)^2e^{it(2k+1-2N)}\right|\leq \frac{1}{2\pi}\sum_{k=0}^{2N-1} (2k+1-2N)^2.
\end{equation}

So, we can find $M>0$ such as $|R_1''(t)|\leq M N^3$ for all $t\in [0, \pi]$.

Finally, we have proved that

\begin{equation}
    |R_1(\theta)|\leq M N^3 \int_{0}^{\theta}tdt = \frac{M}{2} N^3 \theta^2.
\end{equation}

Then, we can find $C>0$ such as $\frac{M}{2}N^3 \theta^2\leq C N^3\sin^2(\theta/2)$. Indeed, the function $\theta\mapsto \frac{M \theta^2}{2 C \sin^2(\theta/2)}$ can be extended by continuity at $0$, therefore it is continuous on the compact $[-\pi, \pi]$ and so is bounded.

\end{proof}

\bs

Let's come back to inequality (\ref{inequality_g}). By using this lemma, we have proved that

\begin{equation}\label{inequality_g2}
g(l, s, N)\ll \frac{N^2}{(1-s)^{2n-2(p + l_1)}} \int_{0}^{\pi} \frac{\sin^2(\theta/2)}{[(1-s)^2 + 4s\sin^2(\theta/2)]^{l_1}}d\theta.
\end{equation}

And so, we need to study 

\begin{equation}\label{i1}
    I_{l_1} := \int_{0}^{\pi} \frac{\sin^2(\theta/2)}{[(1-s)^2 + 4s\sin^2(\theta/2)]^{l_1}}d\theta.
\end{equation}

We will apply the change of variable $t = \tan(\theta/2)$, and recall some relevant formulas:

\begin{equation}
\cos(\theta) = \frac{1-t^2}{1+t^2} \text{ and } d\theta = \frac{2}{1+t^2}dt
\end{equation}
\begin{equation}
\sin^2(\theta/2) = \frac{1}{2}(1 - \cos(\theta)) = \frac{1}{2}\left(1 - \frac{1-t^2}{1+t^2}\right) = \frac{t^2}{1+t^2}
\end{equation}
\begin{equation}
(1-s)^2 + 4s\sin^2(\theta/2) = 1+s^2 - 2s\cos(\theta).
\end{equation}

Now, with the change of variable, equation \eqref{i1} becomes

\begin{eqnarray}
I_{l_1} && = \int_{0}^{+\infty} \frac{t^2/(1+t^2)}{(1+s^2 - 2s\frac{1-t^2}{1+t^2})^{l_1}} \frac{2}{1+t^2}dt \nonumber\\
 &&= 2\int_{0}^{+\infty} \frac{(1+t^2)^{l_1-2}t^2}{((1+s^2)(1+t^2) - 2s(1-t^2))^{l_1}}dt \nonumber\\
 && =2\int_{0}^{+\infty} \frac{(1+t^2)^{l_1-2}t^2}{[t^2(1+s)^2 + (1-s)^2]^{l_1}}dt.
\end{eqnarray}

Now, applying another change of variable $u = \frac{1+s}{1-s}t$, we obtain

\begin{equation}
    I_{l_1} = \frac{2}{(1-s)^{2l_1-3}(1+s)^3} \int_0^{+\infty} \frac{(1+\left(\frac{1-s}{1+s}u\right)^2)^{l_1-2} u^2}{(1+u^2)^{l_1}}du.
\end{equation}

As $l_1\geq 2$, we have that

\begin{eqnarray}
I_{l_1} &&\leq \frac{2}{(1-s)^{2l_1-3}(1+s)^3} \int_0^{+\infty} \frac{(1 + u^2)^{l_1 - 2}u^2}{(1+u^2)^{l_1}}du \nonumber\\
&& \leq \frac{2}{(1-s)^{2l_1-3}(1+s)^3} \int_0^{+\infty} \frac{u^2}{(1+u^2)^{2}}du \nonumber \\
&& \leq \frac{1}{(1-s)^{2l_1-3}(1+s)^3} \int_0^{+\infty} \frac{u^2}{1+u^2}\frac{2du}{1 + u^2} \nonumber\\
&& \leq \frac{1}{(1-s)^{2l_1-3}(1+s)^3} \int_0^{\pi} \sin^2(\theta/2)d\theta \nonumber\\
&& \leq \frac{\pi}{2(1-s)^{2l_1-3}(1+s)^3} \nonumber\\
&& \ll \frac{1}{(1-s)^{2l_1-3}}. \label{inequality_I1}
\end{eqnarray}

We can now state the main theorem of this section.

\begin{theorem}

If at least one of the $l_j$'s is bigger than or equal to $2$. We have that

\begin{equation}
    g(l, s, N)\ll N^2 \frac{(1-s)^3}{(1-s)^{2n - 2p}}.
\end{equation}
\end{theorem}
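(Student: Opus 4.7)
The plan is to combine two estimates that have already been established in the preceding pages: the reduction \eqref{inequality_g2} of $g(l,s,N)$ to a one-dimensional integral $I_{l_1}$ using the one-point correlation function and Lemma \ref{boundingR_1}, together with the bound \eqref{inequality_I1} on $I_{l_1}$ coming from the changes of variable $t = \tan(\theta/2)$ and $u = \frac{1+s}{1-s} t$. No additional serious analysis is needed; the theorem is essentially the composite of these two inequalities.

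First I would use the hypothesis that some $l_j$ is at least $2$ and, by the invariance of $g(l,s,N)$ under permutation of the components of $l$, relabel so that $l_1 \geq 2$. Then the bounds $[(1-s)^2 + 4s\sin^2(\theta_j/2)]^{l_j} \geq (1-s)^{2l_j}$ for $j \geq 2$ and the application of Lemma \ref{boundingR_1} to the remaining $\theta_1$ integration yield exactly \eqref{inequality_g2}, namely
\begin{equation*}
    g(l, s, N) \ll \frac{N^2}{(1-s)^{2n - 2(p+l_1)}}\, I_{l_1}.
\end{equation*}
Second, I would invoke the bound $I_{l_1} \ll (1-s)^{-(2l_1-3)}$ from \eqref{inequality_I1}; here the assumption $l_1 \geq 2$ is essential, since it is precisely what allows the bound $\bigl(1+\bigl(\tfrac{1-s}{1+s}\bigr)^2 u^2\bigr)^{l_1 - 2} \leq (1+u^2)^{l_1-2}$ after the change of variables, reducing the $u$-integral to $\int_0^\infty u^2/(1+u^2)^2\,du$, which is a finite constant independent of $s$ and $N$.

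Multiplying these two estimates together gives
\begin{equation*}
    g(l, s, N) \ll \frac{N^2}{(1-s)^{2n - 2(p+l_1)}} \cdot \frac{1}{(1-s)^{2l_1 - 3}} = \frac{N^2 (1-s)^3}{(1-s)^{2n - 2p}},
\end{equation*}
since the $2l_1$ in the exponents cancels exactly. The crucial feature is that the bound no longer depends on $l_1$, making it a uniform estimate across every tuple $l$ containing at least one entry $\geq 2$. There is no real obstacle here: the technical work (the repulsion bound on $R_1$ near $0$ and the homographic change of variable) has already been carried out above; the only thing to double-check is the bookkeeping on the exponents of $(1-s)$, which must produce the net factor $(1-s)^3$.
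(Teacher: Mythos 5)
Your proof is correct and takes precisely the paper's route: the theorem is simply the composition of the reduction to $I_{l_1}$ in \eqref{inequality_g2} with the bound on $I_{l_1}$ in \eqref{inequality_I1}, and your exponent bookkeeping (cancellation of the $2l_1$) is accurate. The paper's own proof says exactly this in one line.
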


\begin{proof}
We just have to use both inequalities (\ref{inequality_g2}) and (\ref{inequality_I1}).
\end{proof}

\subsection{Second case}

Let's come back to \eqref{g}. We will now assume that at least two of the $l_j$'s are equal to $1$. We can assume that $l_1 = l_2 = 1$.

For all $j\geq 3$, we use this inequality

\begin{equation}
[(1-s)^2 + 4s\sin^2(\theta_j/2)]^{l_j} \geq [1-s]^{2l_j}
\end{equation}
in order to state that

\begin{equation}
g(l, s, N)\leq \frac{1}{(1-s)^{2\sum_{j\geq 3}l_j}} \int_{[0, \pi]^N} \prod_{j=1}^2 \frac{1}{((1-s)^2 + 4s\sin^2(\theta_j/2))^{l_j}} d\mu(\theta).
\end{equation}

But, we know that $\sum_{j\geq 3}l_j = n - (p + l_1 + l_2) = n- p - 2$ and so, 

\begin{equation}
g(l, s, N)\leq \frac{1}{(1-s)^{2n - 2p - 4}} \int_{[0, \pi]^N} \prod_{j=1}^2\frac{1}{((1-s)^2 + 4s\sin^2(\theta_j/2))^{l_j}} d\mu(\theta).
\end{equation}

Using the pair correlation function which is positive, we can write this integral as the following way:

\begin{equation}
    \frac{1}{(1-s)^{2n - 2p - 4}}\frac{1}{N(N-1)}\int_{0}^{\pi}\int_{0}^{\pi} \prod_{j=1}^2 \frac{1}{(1-s)^2 + 4s\sin^2(\theta_j/2)}R_2(\theta_1, \theta_2)d\theta_1d\theta_2.
\end{equation}

Hence, we obtain that

\begin{multline}
    g(l, s, N)\leq \frac{1}{N(N-1)(1-s)^{2n - 2p - 4}}\left[\left(\int_{0}^{\pi} \frac{1}{(1-s)^2 + 4s\sin^2(\theta/2)} R_1(\theta)d\theta\right)^2\right.\\
    -\left.\int_{0}^{\pi}\int_{0}^{\pi} \prod_{j=1}^2 \frac{1}{(1-s)^2 + 4s\sin^2(\theta_j/2)}(S_{2N}(\theta_1 - \theta_2) - S_{2N}(\theta_1 + \theta_2))^2d\theta_1d\theta_2\right]
\end{multline}
because $R_2(\theta_1,\theta_2)=R_1(\theta_1)R_1(\theta_2) - (S_{2N}(\theta_1 - \theta_2) - S_{2N}(\theta_1 + \theta_2))^2$ from (\ref{R_2}).

But, by using $R_1(\theta)\ll  N^3 \sin^2(\theta/2)$ from Lemma \eqref{boundingR_1}, we have that

\begin{eqnarray}
\label{integralbounded} \int_{0}^{\pi} \frac{1}{(1-s)^2 + 4s\sin^2(\theta/2)} R_1(\theta)d\theta &&\ll N^3 \int_{0}^{\pi} \frac{\sin^2(\theta/2)}{(1-s)^2 + 4s\sin^2(\theta/2)} d\theta  \\
&& \ll N^3 \int_{0}^{\pi} \frac{\sin^2(\theta/2)}{4s\sin^2(\theta/2)} d\theta \nonumber \\
&& \ll N^3 \int_{0}^{\pi} \frac{1}{4s}d\theta \nonumber\\
&& \ll N^3.
\end{eqnarray}

So, we obtain that

\begin{multline}\label{inequality_g3}
    g(l, s, N)\ll \frac{1}{N(N-1)(1-s)^{2n - 2p - 4}}\Big[N^6 +  \\
    \int_{0}^{\pi}\int_{0}^{\pi} \prod_{j=1}^2 \frac{1}{(1-s)^2 + 4s\sin^2(\theta_j/2)}(S_{2N}(\theta_1 - \theta_2) - S_{2N}(\theta_1 + \theta_2))^2d\theta_1d\theta_2\Big].
\end{multline}

We know that $S_{2N}(x) = \frac{N}{\pi} + \mathcal{O}(N^3 x^2)$ using Lemma \ref{boundingR_1}. However, this estimation provides the term $N^6$ which is too big for what we want. We will use the following lemma instead.

\begin{lemma}
Let $x\in \R$. We have that

\begin{equation}
S_{2N}(x) = \frac{N}{\pi} + \mathcal{O}(N^2 |x|).
\end{equation}
\end{lemma}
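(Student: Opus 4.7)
The plan is to obtain this linear-in-$|x|$ bound by reusing the exponential-sum representation that the authors already employed in the proof of Lemma \ref{boundingR_1}, but stopping the Taylor expansion one order earlier (first derivative instead of second derivative). This is exactly the right trade-off: a crude $O(N^3 x^2)$ estimate is wasteful when $x$ is not especially small, whereas an $O(N^2 |x|)$ estimate keeps the total size under control for the product of two factors inside the double integral \eqref{inequality_g3}.

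Concretely, I would first rewrite
\begin{equation}
S_{2N}(x)=\frac{1}{2\pi}\,\frac{\sin(Nx)}{\sin(x/2)}=\frac{1}{2\pi}\sum_{k=0}^{2N-1} e^{i x(2k-2N+1)/2},
\end{equation}
by the geometric-sum identity $\sum_{k=0}^{2N-1}e^{ikx}=e^{i(2N-1)x/2}\sin(Nx)/\sin(x/2)$. Setting $x=0$ gives $S_{2N}(0)=N/\pi$, consistent with the normalisation used earlier. This is the same starting point as in Lemma \ref{boundingR_1}, just for $S_{2N}$ directly rather than for $R_1$.

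Next I would apply the fundamental theorem of calculus: $S_{2N}(x)-S_{2N}(0)=\int_0^x S_{2N}'(t)\,dt$, and bound $S_{2N}'$ uniformly. Differentiating the exponential sum term-by-term and taking absolute values gives
\begin{equation}
|S_{2N}'(t)|\leq \frac{1}{4\pi}\sum_{k=0}^{2N-1}|2k-2N+1|.
\end{equation}
The values $|2k-2N+1|$ for $k=0,\dots,2N-1$ are precisely $1,3,5,\dots,2N-1$ each appearing twice (symmetrically around the middle), so the sum equals $2(1+3+\cdots+(2N-1))=2N^2$. Hence $\|S_{2N}'\|_\infty\leq N^2/(2\pi)$, and integrating yields $|S_{2N}(x)-N/\pi|\leq N^2|x|/(2\pi)\ll N^2|x|$.

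No step here is really an obstacle; the only thing to be careful about is the combinatorial accounting of the arithmetic progression $1,3,\dots,2N-1$ that gives exactly $N^2$ (doubled to $2N^2$). The key observation is just that a first-order rather than second-order expansion of the exponential sum turns the $N^3$ factor coming from the squared coefficients in Lemma \ref{boundingR_1} into an $N^2$ factor here, at the cost of $x^2$ becoming $|x|$. This is exactly the asymmetry needed to keep the cross-term $(S_{2N}(\theta_1-\theta_2)-S_{2N}(\theta_1+\theta_2))^2$ small enough in the subsequent estimate of $g(l,s,N)$.
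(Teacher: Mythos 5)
Your proof is correct and essentially the same as the paper's: both write $S_{2N}$ as an exponential sum (or, equivalently, work with $R_1(x/2) = N/\pi - S_{2N}(x)$), apply the fundamental theorem of calculus, and bound the derivative uniformly by $\sum_{k=0}^{2N-1}|2k+1-2N| \ll N^2$. The only cosmetic difference is that the paper phrases the estimate through $R_1$ using $R_1(0)=0$, whereas you apply the same Taylor-to-first-order argument directly to $S_{2N}$.
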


\begin{proof}
We have that
\begin{eqnarray}
\left|S_{2N}(x) - \frac{N}{\pi}\right| &&= |R_1(x/2)| \nonumber \\
&&\leq \left|\int_{0}^{x/2} R_1'(t)dt\right|\\
&&\leq |x/2| \left\|\frac{-i}{2\pi}\sum_{k=0}^{2N-1} (2k+1 - 2N)e^{it (2k+1-2N)}\right\|_{\infty, t} \text{ using (\ref{R_1})}  \nonumber \\
&& \ll |x| \sum_{k=0}^{2N-1} |2k+1-2N| \nonumber \\
&& \ll N^2 |x|.
\end{eqnarray}
\end{proof}

So, we can say that

\begin{eqnarray}
\left[S_{2N}(\theta_1 - \theta_2) - S_{2N}(\theta_1 + \theta_2)\right]^2 &&= \left[\frac{N}{\pi} - \frac{N}{\pi} + \mathcal{O}(N^2 |\theta_1 - \theta_2|) + \mathcal{O}(N^2 |\theta_1 + \theta_2|)\right]^2\\
&&= \left[\mathcal{O}(N^2(|\theta_1 - \theta_2| + |\theta_1 + \theta_2|))\right]^2 \nonumber \\
&& \ll N^4 (|\theta_1-\theta_2| + |\theta_1 + \theta_2|)^2 \nonumber \\
&& \ll N^4(|\theta_1-\theta_2|^2 + |\theta_1 + \theta_2|^2),
\end{eqnarray}
because $(a+b)^2 = a^2 + b^2 +2ab = a^2 +b^2 + (a^2 + b^2 - (a-b)^2) \leq 2(a^2 + b^2)$.

Hence, 

\begin{eqnarray}
\left[S_{2N}(\theta_1 - \theta_2) - S_{2N}(\theta_1 + \theta_2)\right]^2 &&\ll N^4(|\theta_1-\theta_2|^2 + |\theta_1 + \theta_2|^2)\\
&&\ll N^4( \theta_1^2 + \theta_2^2 - 2\theta_1 \theta_2 + \theta_1^2 + \theta_2^2 + 2\theta_1\theta_2) \nonumber \\
&& \ll N^4 (\theta_1^2 + \theta_2^2) \nonumber \\
&& \ll N^4(\sin^2(\theta_1/2) + \sin^2(\theta_2/2)) \label{inequality_S2N}
\end{eqnarray}
where the last line can be justified by noticing that the function $\theta \mapsto \frac{\theta^2}{\sin^2(\theta/2)}$, naturally extended at $0$ by continuity, is bounded on the compact interval $[0,  \pi]$.

Let's use (\ref{inequality_S2N}) in (\ref{inequality_g3}) in order to obtain that

\begin{multline}
    g(l, s, N)\ll \frac{1}{N(N-1)(1-s)^{2n - 2p - 4}}\Big[N^6 +  \\
    N^4\int_{0}^{\pi}\int_{0}^{\pi} \prod_{j=1}^2 \frac{1}{(1-s)^2 + 4s\sin^2(\theta_j/2)}(\sin^2(\theta_1/2) + \sin^2(\theta_2/2))d\theta_1d\theta_2\Big],
\end{multline}
and 
\begin{multline}\label{inequality_g4}
    g(l, s, N)\ll \frac{1}{N(N-1)(1-s)^{2n - 2p - 4}}\left[N^6 +\right. \\
    \left.2N^4\left(\int_{0}^{\pi}\frac{1}{(1-s)^2 + 4s\sin^2(\theta/2)}d\theta\right)\left(\int_{0}^{\pi}\frac{\sin^2(\theta/2)}{(1-s)^2 + 4s\sin^2(\theta/2)}d\theta\right)\right].
\end{multline}

We know that 

\begin{equation}\label{bounded}
\int_{0}^{\pi}\frac{\sin^2(\theta/2)}{(1-s)^2 + 4s\sin^2(\theta/2)}d\theta\leq \int_{0}^{\pi}\frac{\sin^2(\theta/2)}{4s\sin^2(\theta/2)}d\theta = \mathcal{O}(1)
\end{equation}
is bounded.

Moreover, we can compute the other integral by using the following lemma.

\begin{lemma}\label{integralI}
We have that

\begin{equation}
    I(s):=\int_{0}^{\pi} \frac{1}{(1-s)^2 + 4s \sin^2(\theta/2)}d\theta = \frac{\pi}{1-s^2}.
\end{equation}
\end{lemma}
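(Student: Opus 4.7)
The plan is to evaluate $I(s)$ by the Weierstrass substitution $t = \tan(\theta/2)$, exactly as was used in the computation of $I_{l_1}$ earlier in this section. Under this change of variables, $\sin^2(\theta/2) = t^2/(1+t^2)$ and $d\theta = 2\,dt/(1+t^2)$, and the bounds $\theta = 0, \pi$ map to $t = 0, +\infty$. After substituting and multiplying numerator and denominator by $1+t^2$ to clear the inner fraction, the integrand becomes
\begin{equation*}
\frac{2}{(1-s)^2(1+t^2) + 4s t^2}.
\end{equation*}

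The next step is the key algebraic simplification: collect the $t^2$ coefficient as $(1-s)^2 + 4s = (1+s)^2$, so that the denominator factors as $(1-s)^2 + (1+s)^2 t^2$. This reduces $I(s)$ to the standard arctangent integral
\begin{equation*}
I(s) = \int_0^{+\infty} \frac{2}{(1-s)^2 + (1+s)^2 t^2}\, dt.
\end{equation*}
Performing the further substitution $u = \frac{1+s}{1-s} t$ (valid since $0 \leq s < 1$) gives
\begin{equation*}
I(s) = \frac{2}{(1-s)(1+s)} \int_0^{+\infty} \frac{du}{1 + u^2} = \frac{2}{1-s^2} \cdot \frac{\pi}{2} = \frac{\pi}{1-s^2},
\end{equation*}
which is the claimed identity.

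There is no real obstacle: the computation is a direct application of the half-angle substitution together with the identity $(1-s)^2 + 4s = (1+s)^2$, followed by a textbook arctangent evaluation. The only point that requires a moment's attention is checking that $1-s > 0$ so that the substitution $u = (1+s)t/(1-s)$ is legitimate and preserves the orientation of the interval, which is guaranteed by the standing hypothesis $0 \leq s < 1$.
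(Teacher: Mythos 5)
Your proof is correct and follows essentially the same route as the paper: the Weierstrass substitution $t=\tan(\theta/2)$, the algebraic identity $(1-s)^2+4s=(1+s)^2$, and the rescaling $u=\tfrac{1+s}{1-s}t$ reducing to the standard arctangent integral. (As a side note, your intermediate factor $1+t^2$ is the correct one; the paper's displayed proof contains a typo writing $1-t^2$ in the first two lines but still reaches the same correct form $(1-s)^2+(1+s)^2t^2$.)
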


\begin{proof}
We do the change of variable $t = \tan(\theta/2)$ to obtain 

\begin{eqnarray}
I(s) && = \int_{0}^{+\infty} \frac{1}{(1-s)^2 + 4s \frac{t^2}{1 - t^2}} \frac{2}{1 - t^2}dt\\
&& = 2\int_{0}^{+\infty} \frac{1}{(1-s)^2 (1-t^2) + 4st^2}dt \nonumber \\
&&= 2\int_{0}^{+\infty} \frac{1}{(1-s)^2 + (1 + s)^2 t^2}dt \nonumber\\
&& = \frac{2}{(1-s)^2} \int_{0}^{+\infty} \frac{1}{1 + \left(\frac{1+s}{1-s} t\right)^2}dt.
\end{eqnarray}

We do the change of variable $u = \frac{1+s}{1-s} t$ and so,

\begin{eqnarray}
I(s)&& = \frac{2}{1-s^2} \int_{0}^{+\infty} \frac{1}{1 + u^2}du\\
&& = \frac{\pi}{1-s^2}.
\end{eqnarray}

\end{proof}

Let's come back to \eqref{inequality_g4}. By using this lemma and \eqref{bounded}, we have that

\begin{eqnarray}
2N^4\left(\int_{0}^{\pi}\frac{1}{(1-s)^2 + 4s\sin^2(\theta/2)}d\theta\right)\left(\int_{0}^{\pi}\frac{\sin^2(\theta/2)}{(1-s)^2 + 4s\sin^2(\theta/2)}d\theta\right) &&= \mathcal{O}\left(\frac{2\pi N^4}{1-s^2}\right) \nonumber \\
&& = \mathcal{O}\left(\frac{2\pi N^4}{(1-s)(1+s)}\right)\\
&& = \mathcal{O}\left(\frac{N^4}{1-s}\right).
\end{eqnarray}
It is unclear whether this term dominates $N^6$ at this time; to know we would have to set how fast $a \to 0$ as $N \to \infty$. For now, we keep both terms and later we fix the growth of $a$ to find the dominating term.  
Finally, we have that

\begin{equation}
    g(l, s, N) \ll \frac{1}{N(N-1)(1-s)^{2n - 2p - 4}}\left(N^6 + \frac{N^4}{1-s} \right),
\end{equation}
and so we obtain the following theorem.
\begin{theorem}
If at least two of the $l_j$'s are equal to $1$, we have that 
\begin{equation}
    g(l, s, N)\ll \frac{N^4(1-s)^4}{(1-s)^{2n-2p}} + \frac{N^2(1-s)^3}{(1-s)^{2n-2p}}.
\end{equation}

\end{theorem}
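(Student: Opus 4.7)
My plan is to assemble the estimates already established in this subsection into a single bound. By the permutation invariance of $g(l,s,N)$ in the components of $l$, I can assume without loss of generality that $l_1 = l_2 = 1$. The crude inequality $(1-s)^2 + 4s\sin^2(\theta_j/2) \geq (1-s)^2$ for $j\geq 3$ peels off a factor $(1-s)^{-2(n-p-2)}$, leaving the integral over $\theta_1,\theta_2$ of the two remaining factors against the Haar-induced eigenangle measure. This is exactly the setting that invites the $2$-point correlation function $R_2$; the identity $R_2(\theta_1,\theta_2) = R_1(\theta_1)R_1(\theta_2) - (S_{2N}(\theta_1-\theta_2) - S_{2N}(\theta_1+\theta_2))^2$ splits the estimate into a \emph{product} part and a \emph{cancellation} part, which I will bound separately.

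For the product part, Lemma~\ref{boundingR_1} gives $R_1(\theta)\ll N^3\sin^2(\theta/2)$. After pairing with $1/((1-s)^2 + 4s\sin^2(\theta/2))$ the numerator $\sin^2(\theta/2)$ absorbs the second term of the denominator, so the single integral is $\mathcal{O}(N^3)$; squaring gives the $N^6$ contribution displayed in \eqref{inequality_g3}.

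For the cancellation part, naively applying Lemma~\ref{boundingR_1} to $S_{2N}(\theta_1\pm\theta_2)$ would reproduce the $N^6$ size and waste the effort of using $R_2$ instead of $R_1\otimes R_1$. The refined bound $S_{2N}(x) = N/\pi + \mathcal{O}(N^2|x|)$ proved in the subsidiary lemma is the essential input: combined with $(a+b)^2\leq 2(a^2+b^2)$ and the continuity extension bounding $\theta^2/\sin^2(\theta/2)$ on $[-\pi,\pi]$, it yields $(S_{2N}(\theta_1-\theta_2)-S_{2N}(\theta_1+\theta_2))^2\ll N^4(\sin^2(\theta_1/2)+\sin^2(\theta_2/2))$. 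Multiplying out, the double integral factors into a product of two one-dimensional integrals; one is handled by Lemma~\ref{integralI} with value $\pi/(1-s^2)$, and the other is $\mathcal{O}(1)$ by the trivial cancellation $\sin^2(\theta/2)/(4s\sin^2(\theta/2)) = 1/(4s)$. This contributes $N^4/(1-s)$.

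Putting the two contributions together gives
$$g(l,s,N) \ll \frac{1}{N(N-1)(1-s)^{2n-2p-4}}\left(N^6 + \frac{N^4}{1-s}\right),$$
and distributing the prefactor with $N(N-1)\asymp N^2$ rewrites the two terms as $N^4(1-s)^4/(1-s)^{2n-2p}$ and $N^2(1-s)^3/(1-s)^{2n-2p}$, which is the claim. The main obstacle, already overcome, was the refined estimate on $S_{2N}$: without it the cancellation in $R_2$ is wasted and the second case recovers only the bound obtained in the first case, losing the gain that justifies treating the $(l_1,l_2)=(1,1)$ configuration separately.
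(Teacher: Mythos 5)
Your proposal is correct and follows the paper's proof step for step: the same peel-off of the $j\geq 3$ factors via $(1-s)^2+4s\sin^2(\theta_j/2)\geq(1-s)^2$, the same rewrite through $R_2$ and its decomposition into $R_1\otimes R_1$ minus the squared kernel difference, the same use of $R_1(\theta)\ll N^3\sin^2(\theta/2)$ for the product part and of the linear bound $S_{2N}(x)=N/\pi+\mathcal{O}(N^2|x|)$ together with $(a+b)^2\leq 2(a^2+b^2)$ and Lemma~\ref{integralI} for the cancellation part, and the same final simplification via $N(N-1)\asymp N^2$. One small caveat on your closing remark: since $R_2\geq 0$, the squared-kernel term is subtracted, so dropping it entirely already yields the first term $N^4(1-s)^4/(1-s)^{2n-2p}$ without any refined $S_{2N}$ estimate; moreover, losing the refined estimate would \emph{not} recover the first-case bound $N^2(1-s)^3/(1-s)^{2n-2p}$ but rather the weaker $N^4(1-s)^3/(1-s)^{2n-2p}$, so the framing in your last sentence is a little off even though it does not affect the validity of the argument.
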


\subsection{Back to $h$}

Let's sum up what we have seen. We want to study the quantity $h(n, s, N)$ which is equal, using (\ref{h}), to 

\begin{equation}
    \frac{(s-1)^n}{s^n}\sum_{p + l_1 + ... + l_N = n}\binom{n}{p, l_1, ..., l_N} (s^2-1)^{n-p} g(l, s, N)
\end{equation}
where

\begin{equation}
    g(l, s, N) :=\\\int_{[0, \pi]^N}\prod_{j=1}^N \frac{1}{((1-s)^2 + 4s\sin^2(\theta_j/2))^{l_j}} d\mu(\theta).
\end{equation}

The two last sections proved the following theorem.

\begin{theorem}\label{bothcases}
If at least one of the $l_j$'s is bigger than or equal to $2$ or if at least two of the $l_j$'s are equal to $1$, we have that

\begin{equation}
    g(l, s, N)\ll \frac{N^4(1-s)^4}{(1-s)^{2n-2p}} + \frac{N^2(1-s)^3}{(1-s)^{2n-2p}}.
\end{equation}
\end{theorem}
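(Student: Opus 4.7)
The plan is straightforward: Theorem~\ref{bothcases} is nothing more than the union of the two cases already handled in the previous two subsections, so I would prove it by direct case analysis on the multi-index $l = (l_1,\dots,l_N)$. The hypothesis splits into (a) ``at least one $l_j \geq 2$'' and (b) ``at least two of the $l_j$'s equal $1$'', and these cases may overlap. In each disjunct, one of the two theorems already proved in Section~\ref{last_section} yields a bound on $g(l,s,N)$, and I would then observe that both of those bounds are majorised by the one claimed in Theorem~\ref{bothcases}.

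Concretely, in case (a) the first theorem of the section gives
\[
g(l,s,N) \ll \frac{N^2(1-s)^3}{(1-s)^{2n-2p}},
\]
which is exactly the second summand of the claimed bound; appending the nonnegative term $N^4(1-s)^4/(1-s)^{2n-2p}$ only weakens the inequality, so the bound in Theorem~\ref{bothcases} still holds. In case (b) the second theorem gives precisely
\[
g(l,s,N) \ll \frac{N^4(1-s)^4}{(1-s)^{2n-2p}} + \frac{N^2(1-s)^3}{(1-s)^{2n-2p}},
\]
which is the stated bound. Since the hypothesis of Theorem~\ref{bothcases} forces at least one of (a), (b) to hold, the conclusion follows without further calculation.

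There is essentially no obstacle here: all the genuine analytic work has already been carried out in the two preceding subsections, which relied on the polynomial estimate $R_1(\theta) \ll N^3 \sin^2(\theta/2)$ from Lemma~\ref{boundingR_1}, the refined Lipschitz-type bound $S_{2N}(x) = N/\pi + \mathcal{O}(N^2|x|)$ used to control the cross-term in $R_2(\theta_1,\theta_2)$, and the explicit evaluations of $I_{l_1}$ and $I(s) = \pi/(1-s^2)$. Theorem~\ref{bothcases} merely packages both estimates into a single uniform bound on $g(l,s,N)$, which is the form convenient for the later summation over the multi-indices $(p,l_1,\dots,l_N)$ in the multinomial expansion \eqref{h} of $h(n,s,N)$ when proving Theorem~\ref{theorem2}. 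The only multi-indices excluded by the hypothesis are those with every $l_j\in\{0,1\}$ and at most one $l_j=1$, and these degenerate cases are trivial (the product in $g$ contains at most one nontrivial factor) so are handled separately when Theorem~\ref{bothcases} is later applied.
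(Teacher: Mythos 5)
Your proof is correct and matches the paper's own treatment, which simply observes that the two preceding subsection theorems cover the two disjuncts of the hypothesis and that the stated bound dominates both (the first‑case bound being exactly the second summand). The paper gives no further detail beyond ``the two last sections proved the following theorem,'' so your case analysis and the remark that the excluded multi-indices are precisely those with $\sum_j l_j \le 1$ (equivalently $p \ge n-1$, which are handled separately in the expansion of $h$) reproduce the intended argument.
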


We are now ready to study the aymptotic of $h$ depending of three cases $n\geq 2$, $n = 1$ and $n = 0$.

\subsection{Case $n\geq 2$}

We can split the sum in three parts according to the value of $p$:

\begin{multline}
    h(n, s, N) = \frac{(s-1)^n}{s^n}\sum_{\underset{p = n}{p + l_1 + ... + l_N = n}}\binom{n}{p, l_1, ..., l_N} N^{p}(s^2-1)^{n-p} g(l, s, N)+\\
    \frac{(s-1)^n}{s^n}\sum_{\underset{p = n-1}{p + l_1 + ... + l_N = n}}\binom{n}{p, l_1, ..., l_N} N^{p}(s^2-1)^{n-p} g(l, s, N)+\\
    \frac{(s-1)^n}{s^n}\sum_{\underset{p\leq n-2}{p + l_1 + ... + l_N = n}}\binom{n}{p, l_1, ..., l_N} N^{p}(s^2-1)^{n-p} g(l, s, N).
\end{multline}

which is equal to

\begin{multline}
    \frac{(s-1)^n}{s^n}N^n+\\
    \frac{(s-1)^n}{s^n}n(s^2-1) N^{n-1}\int_{[0, \pi]^N} \left(\sum_{j=1}^{N}\frac{1}{(1-s)^2 + 4s\sin^2(\theta_j/2)}\right)d\mu(\theta)+\\
    \frac{(s-1)^n}{s^n}\sum_{\underset{p\leq n-2}{p + l_1 + ... + l_N = n}}\binom{n}{p, l_1, ..., l_N} N^{p}(s^2-1)^{n-p} g(l, s, N).
\end{multline}

where 
\begin{eqnarray}
\int_{[0, \pi]^N} \left(\sum_{j=1}^{N}\frac{1}{(1-s)^2 + 4s\sin^2(\theta_j/2)}\right)d\mu(\theta)&& = \int_{0}^{\pi} \frac{R_1(\theta)}{(1-s)^2 + 4s\sin^2(\theta/2)}d\theta \nonumber \\
&&\ll N^3 \int_0^{\pi} \frac{\sin^2(\theta/2)}{(1-s)^2 + 4s\sin^2(\theta/2)}d\theta \quad \text{ (using Lemma \eqref{boundinglemma})}  \\
&&\ll N^3 \int_0^{\pi} \frac{\sin^2(\theta/2)}{4s\sin^2(\theta/2)}d\theta \nonumber\\
&&\ll N^3.
\end{eqnarray}

So, we have that

\begin{eqnarray}\label{h2}
     h(n, s, N) = && \frac{(s-1)^n}{s^n}N^n + \mathcal{O}(N^{n+2} (1-s)^{n+1}) \nonumber \\ && + \frac{(s-1)^n}{s^n}\sum_{\underset{p\leq n-2}{p + l_1 + ... + l_N = n}}\binom{n}{p, l_1, ..., l_N} N^{p}(s^2-1)^{n-p} g(l, s, N).
\end{eqnarray}

Furthermore, if $p\leq n-2$, we have that $l_1 + ... + l_N\geq 2$ since $p + ... + l_N = n$ and so, either at least one of the $l_j$'s is bigger than or equal to $2$ or at least two of the $l_j$'s are equal to $1$. We can apply the Theorem \ref{bothcases} and so, 

\begin{multline}
\left|\frac{(s-1)^n}{s^n}\sum_{\underset{p\leq n-2}{p + l_1 + ... + l_N = n}}\binom{n}{p, l_1, ..., l_N} N^{p}(s^2-1)^{n-p} g(l, s, N)\right|\ll\\
\frac{(s-1)^n}{s^n}\sum_{\underset{p\leq n-2}{p + l_1 + ... + l_N = n}}\binom{n}{p, l_1, ..., l_N} N^{p}|(s^2-1)|^{n-p} \left(\frac{N^4(1-s)^4}{(1-s)^{2n-2p}} + \frac{N^2(1-s)^3}{(1-s)^{2n-2p}} \right)
\end{multline}
\begin{eqnarray}
&& \ll \left( N^4 (1-s)^4 + N^2 (1-s)^3 \right)\sum_{\underset{p\leq n-2}{p + l_1 + ... + l_N = n}}\binom{n}{p, l_1, ..., l_N} N^{p}(1-s)^{p}\\
&& \ll\left( N^4 (1-s)^4 + N^2 (1-s)^3 \right) \sum_{p + l_1 + ... + l_N = n}\binom{n}{p, l_1, ..., l_N} N^{p}(1-s)^{p} \nonumber \\
&& \ll \left( N^4 (1-s)^4 + N^2 (1-s)^3 \right) \sum_{p + l_1 + ... + l_N = n}\binom{n}{p, l_1, ..., l_N} [N(1-s)]^{p}1^{l_1}1^{l_2}...1^{l_N}\\
&& \ll \left( N^4 (1-s)^4 + N^2 (1-s)^3 \right) (N(1-s) + 1 + ... + 1)^{n} \nonumber \\
&& \ll \left( N^4 (1-s)^4 + N^2 (1-s)^3 \right) (N(1-s) + N)^{n}\\
&& \ll \left( N^4 (1-s)^4 + N^2 (1-s)^3 \right) N^n \nonumber \\
&& \ll N^{n+4}(1-s)^4 + N^{n + 2} (1-s)^{3}.
\end{eqnarray}

We use this inequality in (\ref{h2}) and we obtain that

\begin{equation}\label{h3}
h(n, s, N) = \frac{(s-1)^n}{s^n}N^n + \mathcal{O}(N^{n+2} (1-s)^{n+1})+\mathcal{O}\left( N^{n+4}(1-s)^4\right) + \mathcal{O}\left(N^{n + 2} (1-s)^{3} \right),
\end{equation}
which yields the following theorem.

\begin{theorem}
If $n = 2$, we have that

\begin{equation}
    h(n = 2, s, N) = N^2(1-s)^2 + \mathcal{O}\left( N^{6}(1-s)^4\right) +  \mathcal{O}(N^{4} (1-s)^3),
\end{equation}
and if $n\geq 3$, we have that 

\begin{equation}
    h(n, s, N) = \mathcal{O}\left( N^{n+4}(1-s)^4\right) + \mathcal{O}(N^{n+2} (1-s)^3).
\end{equation}

\end{theorem}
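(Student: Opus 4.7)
The plan is to read off the statement directly from the asymptotic expansion already established in \eqref{h3}, namely
\begin{equation*}
h(n, s, N) = \frac{(s-1)^n}{s^n}N^n + \mathcal{O}(N^{n+2}(1-s)^{n+1}) + \mathcal{O}(N^{n+4}(1-s)^4) + \mathcal{O}(N^{n+2}(1-s)^3),
\end{equation*}
by a case-by-case bookkeeping of the error terms. I would lean on two elementary observations. First, since $s=e^{-a/N}$ with $a=o(1)$, we have $s\to 1$ and hence $1/s^k = 1 + \mathcal{O}(1-s)$ for any fixed integer $k\geq 1$. Second, since $1-s\leq 1$, the factor $(1-s)^{n+1}$ is bounded above by $(1-s)^3$ whenever $n\geq 2$, so the error $\mathcal{O}(N^{n+2}(1-s)^{n+1})$ in \eqref{h3} can always be absorbed into $\mathcal{O}(N^{n+2}(1-s)^3)$.

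For the case $n=2$, I would expand the leading term as
\begin{equation*}
\frac{(s-1)^2}{s^2}N^2 = \frac{N^2(1-s)^2}{s^2} = N^2(1-s)^2\bigl(1 + \mathcal{O}(1-s)\bigr) = N^2(1-s)^2 + \mathcal{O}(N^2(1-s)^3),
\end{equation*}
and note that the residual $\mathcal{O}(N^2(1-s)^3)$ is dominated by the existing $\mathcal{O}(N^4(1-s)^3)$ error. Collecting everything, the three remaining error terms in \eqref{h3} specialise at $n=2$ to $\mathcal{O}(N^4(1-s)^3)$ or $\mathcal{O}(N^6(1-s)^4)$, which gives the first assertion.

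For $n\geq 3$, the leading term $\frac{(s-1)^n}{s^n}N^n$ has size $\mathcal{O}(N^n(1-s)^n)$ because $1/s^n$ is bounded. I would then absorb this into the existing error $\mathcal{O}(N^{n+2}(1-s)^3)$ by checking that the ratio $(1-s)^{n-3}/N^2$ tends to zero; this is immediate from $n-3\geq 0$ and $N\to\infty$. Combining this absorption with the one noted above for $\mathcal{O}(N^{n+2}(1-s)^{n+1})$, the expansion \eqref{h3} collapses exactly to the second assertion.

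The only point requiring genuine care is verifying that the leading piece $\frac{(s-1)^n}{s^n}N^n$ does not persist as a main term for $n\geq 3$: by analogy with the $n=2$ case one might expect it to survive, but because its scale $N^n(1-s)^n$ (of order $a^n$ under the natural relation $1-s \sim a/N$) is strictly dominated by $N^{n+2}(1-s)^3$ (of order $N^{n-1}a^3$) as soon as $n\geq 3$ and $a\to 0$, it is swallowed by the remainder. No other step is subtle; the real work was already done in deriving \eqref{h3}.
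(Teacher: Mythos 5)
Your argument is correct and takes essentially the same route as the paper: the authors' proof is the single observation that one should read \eqref{h3} together with the expansion $\frac{(s-1)^n}{s^n} = (1-s)^n + \mathcal{O}((1-s)^{n+1})$, and you have simply made explicit the bookkeeping the paper leaves to the reader — for $n=2$ the residual $\mathcal{O}(N^2(1-s)^3)$ is absorbed into $\mathcal{O}(N^4(1-s)^3)$, and for $n\geq 3$ both the leading $N^n(1-s)^n$ term and the $N^{n+2}(1-s)^{n+1}$ term are swallowed by $\mathcal{O}(N^{n+2}(1-s)^3)$ since $1-s<1$ and $n\geq 3$. Nothing is missing.
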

\begin{proof}
We just have to use (\ref{h3}) and 
\begin{equation}
    \frac{(s-1)^n}{s^n} = (1-s)^n + \mathcal{O}((1-s)^{n+1}).
\end{equation}
\end{proof}

\subsection{Case $n= 1$}

We have that
\begin{eqnarray}
h(1, s, N) &&= \int_{[0, \pi]^N}(s-1)\sum_{j=1}^N \frac{2s-2\cos(\theta_j)}{1+s^2 - 2s\cos(\theta_j)} d\mu(\theta)\\
&& = (s-1)\int_{0}^{\pi} \frac{2s - \cos(\theta)}{1 + s^2 - 2s\cos(\theta)} R_1(\theta)d\theta.
\end{eqnarray}

We use
\begin{equation}
\frac{2s-2\cos(\theta)}{1+s^2 - 2s\cos(\theta)} = \frac{1}{s}\left(1+  \frac{s^2-1}{(1-s)^2 + 4s\sin^2(\theta/2)}\right)
\end{equation}
and 
\begin{equation}
    \int_{0}^{\pi} R_1(\theta)d\theta = N
\end{equation}
to state that

\begin{equation}
    h(1, s, N) = \frac{N(s-1)}{s} + \frac{(s+1)(1 - s)^2}{s} \int_{0}^{\pi} \frac{1}{(1-s)^2 + 4s\sin^2(\theta/2)} R_1(\theta)d\theta.
\end{equation}

We can rewrite the integral as 

\begin{equation}
\int_{0}^{\pi} \frac{1}{4s\sin^2(\theta/2)} R_1(\theta)d\theta + \\\int_{0}^{\pi} \left(\frac{1}{(1-s)^2 + 4s\sin^2(\theta/2)}-\frac{1}{4s\sin^2(\theta/2)}\right)R_1(\theta)d\theta.
\end{equation}

So, 

\begin{multline}
    h(1, s, N) = \frac{N(s-1)}{s} + \frac{(s+1)}{4s^2}(1-s)^2 \int_{0}^{\pi} \frac{R_1(\theta)}{\sin^2(\theta/2)}d\theta\\
    -\frac{(s+1)(1-s)^4}{s}\int_{0}^{\pi} \frac{1}{(1-s)^2 + 4s\sin^2(\theta/2)}\frac{R_1(\theta)}{\sin^2(\theta/2)}d\theta.
\end{multline}

We know that $\frac{R_1(\theta)}{\sin^2(\theta/2)}\ll N^3$ from Lemma \ref{boundingR_1} and $\int_0^{\pi} \frac{d\theta}{(1-s)^2 + 4s\sin^2(\theta/2)}\ll \frac{1}{1-s}$ from Lemma \ref{integralI} and so, 

\begin{equation}\label{h4}
    h(1, s, N) = \frac{N(s-1)}{s} + \frac{(s+1)}{4s^2}(1-s)^2 \int_{0}^{\pi} \frac{R_1(\theta)}{\sin^2(\theta/2)}d\theta + \mathcal{O}(N^3 (1-s)^3).
\end{equation}

The following lemma enables us to compute the above integral.

\begin{lemma}
\begin{equation}
\int_{0}^{\pi} \frac{R_1(\theta)}{\sin^2(\theta/2)}d\theta = 2N^2.
\end{equation}
\end{lemma}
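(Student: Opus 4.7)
The plan is to turn $R_1(\theta)$ into a sum of squared trigonometric kernels so that the $\sin^2(\theta/2)$ in the denominator cancels cleanly, and then to invoke the classical identity $\int_0^{\pi}\frac{\sin^2(k\theta/2)}{\sin^2(\theta/2)}\,d\theta=k\pi$ for each positive integer $k$.

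First I would rewrite $R_1$ in Fourier form. Starting from
\begin{equation*}
R_1(\theta)=\frac{N}{\pi}-\frac{1}{2\pi}\sum_{k=0}^{2N-1}e^{i\theta(2k+1-2N)}
\end{equation*}
and grouping each exponent with its negative (the set $\{2k+1-2N:0\le k\le 2N-1\}$ consists of the odd integers $\pm1,\pm3,\dots,\pm(2N-1)$), the sum collapses to $2\sum_{m=0}^{N-1}\cos((2m+1)\theta)$. Using $N=\sum_{m=0}^{N-1}1$ and $1-\cos\alpha=2\sin^2(\alpha/2)$, this yields the clean identity
\begin{equation*}
R_1(\theta)=\frac{1}{\pi}\sum_{m=0}^{N-1}\bigl(1-\cos((2m+1)\theta)\bigr)=\frac{2}{\pi}\sum_{m=0}^{N-1}\sin^2\!\left(\frac{(2m+1)\theta}{2}\right).
\end{equation*}

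Next I would substitute this into the integral, interchange the finite sum with the integral, and reduce to computing
\begin{equation*}
J_k:=\int_0^{\pi}\frac{\sin^2(k\theta/2)}{\sin^2(\theta/2)}\,d\theta
\end{equation*}
for odd $k=2m+1$. The standard derivation uses
\begin{equation*}
\frac{\sin(k\theta/2)}{\sin(\theta/2)}=\left|\sum_{j=0}^{k-1}e^{ij\theta}\right|,
\end{equation*}
so that $\int_0^{2\pi}\frac{\sin^2(k\theta/2)}{\sin^2(\theta/2)}\,d\theta=2\pi k$ by orthogonality of characters; the evenness of the integrand under $\theta\mapsto 2\pi-\theta$ (the transformation sends $\sin(\theta/2)\mapsto\sin(\theta/2)$ and $\sin(k\theta/2)\mapsto\pm\sin(k\theta/2)$) then gives $J_k=k\pi$.

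Putting it all together,
\begin{equation*}
\int_0^{\pi}\frac{R_1(\theta)}{\sin^2(\theta/2)}\,d\theta=\frac{2}{\pi}\sum_{m=0}^{N-1}J_{2m+1}=\frac{2}{\pi}\sum_{m=0}^{N-1}(2m+1)\pi=2\cdot N^2,
\end{equation*}
using $\sum_{m=0}^{N-1}(2m+1)=N^2$. There is no real obstacle here; the only point requiring a moment's thought is justifying that the apparent singularity of $1/\sin^2(\theta/2)$ at the origin is harmless, which is automatic since each $\sin^2((2m+1)\theta/2)$ vanishes to order $\theta^2$ at $0$, so every summand $J_{2m+1}$ converges and the interchange of sum and integral is trivial (the sum is finite).
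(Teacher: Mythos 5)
Your argument is correct and takes essentially the paper's approach: express $R_1$ as a sum of $\sin^2$ kernels and evaluate each $\int_0^\pi \sin^2(k\theta/2)/\sin^2(\theta/2)\,d\theta = k\pi$ via the Dirichlet-kernel factorization and orthogonality of exponentials, then sum the arithmetic progression. The one small streamlining is that you pair conjugate exponents up front so the sum runs over the $N$ positive odd frequencies $2m+1$, whereas the paper keeps all $2N$ frequencies $2k+1-2N$ and tracks $|2k+1-2N|$ through the computation.
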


\begin{proof}
We have seen that

\begin{equation}
    R_1(\theta) = \frac{1}{2\pi} \sum_{k=0}^{2N-1} \left(1-e^{i\theta(2k+1 - 2N)}\right).
\end{equation}

The result must be real, so we rewrite

\begin{eqnarray}
R_1(\theta) && =\frac{1}{2\pi} \sum_{k=0}^{2N-1}(1 - \cos((2k+1 - 2N)\theta))\\
&& = \frac{1}{\pi} \sum_{k=0}^{2N-1} \sin^2((2k+1 - 2N)\theta/2).
\end{eqnarray}

We use this to state that

\begin{equation}\label{integral}
\int_0^{\pi} \frac{R_1(\theta)}{\sin^2(\theta/2)}d\theta = \frac{1}{\pi}\sum_{k=0}^{2N-1} \int_0^{\pi} \frac{\sin^2((2k+1-2N)\theta/2)}{\sin^2(\theta/2)}d\theta.
\end{equation}

Moreover, if $A\in \N^*$, we know that

\begin{equation}
    \sum_{k=0}^{A-1}e^{ik\theta} = \frac{\sin(A \theta/2)}{\sin(\theta/2)}e^{i \theta/2(A-1)}
\end{equation}

and so that

\begin{equation}
    \frac{\sin(A \theta/2)}{\sin(\theta/2)} = \sum_{k=0}^{A-1} e^{i\theta(k - A/2 + 1/2)}.
\end{equation}

Since the sine function is an odd function, we have that

\begin{equation}
    \left(\frac{\sin(A \theta/2)}{\sin(\theta/2)}\right)^2 = \left(\sum_{k=0}^{|A|-1} e^{i\theta(k - |A|/2 + 1/2)}\right)^2.
\end{equation}
even if $A$ is a negative integer.

We use the above formula in (\ref{integral}) with $A = 2k + 1 - 2N$ and we have that

\begin{eqnarray}
\int_0^{\pi} \frac{R_1(\theta)}{\sin^2(\theta/2)}d\theta&&=\frac{1}{\pi} \sum_{k=0}^{2N-1} \int_{0}^{\pi} \left(\sum_{j = 0}^{|2k+1 - 2N|-1} e^{i \theta(j - |2k+1 - 2N|/2+1/2)}\right)^2d\theta\\
&& = \frac{1}{\pi} \sum_{k=0}^{2N-1} \sum_{\underset{0\leq j_2\leq |2k+1 - 2N|-1}{0\leq j_1\leq |2k+1 - 2N|-1}} \int_0^{\pi} e^{i\theta(j_1 + j_2 - |2k+1 - 2N| + 1)}d\theta \nonumber \\
&& = \frac{1}{\pi} \sum_{k=0}^{2N-1} \sum_{\underset{0\leq j_2\leq |2k+1 - 2N|-1}{0\leq j_1\leq |2k+1 - 2N|-1}} \int_0^{\pi} \cos(\theta(j_1 + j_2 - |2k+1 - 2N| + 1))d\theta
\end{eqnarray}
because the result must be real.

If $n$ is an integer, one can easily check that $\frac{1}{\pi}\int_{0}^{\pi}\cos(n\theta) = \delta_{n}$ i.e. $1$ if $n = 0$ and $0$ otherwise.

Therefore, 

\begin{eqnarray}
\int_0^{\pi} \frac{R_1(\theta)}{\sin^2(\theta/2)}d\theta&& =\sum_{k=0}^{2N-1} \sum_{\underset{0\leq j_2\leq |2k+1 - 2N|-1}{0\leq j_1\leq |2k+1 - 2N|-1}} \delta_{j_1 + j_2 - |2k+1 - 2N| + 1}\\
&& = \sum_{k=0}^{2N-1} |2k+1 - 2N| \nonumber \\
&& = \sum_{k=0}^{N-1}(2N - 2k - 1) + \sum_{k=N}^{2N-1} (2k + 1 - 2N) \nonumber \\
&& = \sum_{k=0}^{N-1}(2N - 2k - 1) + \sum_{k=0}^{N-1} (2k + 1)\\
&& =  \sum_{k=0}^{N-1} 2N \nonumber \\
&& = 2N^2.
\end{eqnarray}

\end{proof}

Using this lemma in (\ref{h4}), we obtain that

\begin{equation}\label{h_1}
    h(1, s, N) = \frac{N(s-1)}{s} + \frac{s+1}{2s^2}N^2(1-s)^2 +\mathcal{O}(N^3(1-s)^3).
\end{equation}

We can now state the main theorem of this subsection.

\begin{theorem}
We have that

\begin{equation}
h(1, s, N) = N(s-1) + N(N-1)(1-s)^2  + \mathcal{O}(N^3(1-s)^3).
\end{equation}

\end{theorem}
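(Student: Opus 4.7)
The plan is to deduce the claimed asymptotic directly from equation~(\ref{h_1}), namely
\[
h(1, s, N) = \frac{N(s-1)}{s} + \frac{s+1}{2s^2}N^2(1-s)^2 + \mathcal{O}(N^3(1-s)^3),
\]
by Taylor-expanding the two rational prefactors $1/s$ and $(s+1)/(2s^2)$ around $s = 1$ and bundling the higher-order remainders into the existing error term. Since the hard analytic work (the bounds on $R_1$ and on the integral $I(s)$, together with the evaluation $\int_0^\pi R_1(\theta)/\sin^2(\theta/2)\,d\theta = 2N^2$) has already been carried out, what remains is purely a book-keeping step.

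First, I would write $1/s = 1 + (1-s) + \mathcal{O}((1-s)^2)$, which gives
\[
\frac{N(s-1)}{s} = -N(1-s)\bigl(1 + (1-s) + \mathcal{O}((1-s)^2)\bigr) = N(s-1) - N(1-s)^2 + \mathcal{O}(N(1-s)^3).
\]
Next, I would expand
\[
\frac{s+1}{2s^2} = 1 + \mathcal{O}(1-s),
\]
from which
\[
\frac{s+1}{2s^2}N^2(1-s)^2 = N^2(1-s)^2 + \mathcal{O}(N^2(1-s)^3).
\]

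Substituting these two expansions into (\ref{h_1}) and combining the coefficients of $(1-s)^2$ produces $N^2(1-s)^2 - N(1-s)^2 = N(N-1)(1-s)^2$, while the two remainders $\mathcal{O}(N(1-s)^3)$ and $\mathcal{O}(N^2(1-s)^3)$ are both dominated by $\mathcal{O}(N^3(1-s)^3)$ for $N \geq 1$, which is already present in (\ref{h_1}). This yields
\[
h(1, s, N) = N(s-1) + N(N-1)(1-s)^2 + \mathcal{O}(N^3(1-s)^3),
\]
as required. I do not anticipate a genuine obstacle: the only potential subtlety is checking that the absolute error terms produced by the expansions of $1/s$ and $(s+1)/(2s^2)$ are indeed of order at most $N^3(1-s)^3$ under the regime $s = e^{-a/N}$ with $a = o(1)$, but this is immediate since $N(1-s)^3$ and $N^2(1-s)^3$ are both bounded above by $N^3(1-s)^3$.
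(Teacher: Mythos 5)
Your proof is correct and follows the same approach as the paper's own argument: both start from the expansion \eqref{h_1} and Taylor-expand the prefactors $1/s$ and $(s+1)/(2s^2)$ around $s=1$, absorbing the higher-order corrections into the $\mathcal{O}(N^3(1-s)^3)$ error term. The paper's presentation differs only cosmetically, writing $\tfrac{N(s-1)}{s} = \tfrac{N(s-1)}{1+(s-1)} = N(s-1)(1+(1-s)) + \mathcal{O}(N(1-s)^3)$ rather than expanding $1/s$ directly, but the computation and its bookkeeping are the same.
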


\begin{proof}
We expand \eqref{h_1} when $s$ goes to $1$. Then,

\begin{eqnarray}
 && h(1, s, N) =  \frac{N(s-1)}{s} + \frac{s+1}{2s^2}N^2(1-s)^2  + \mathcal{O}(N^3(1-s)^3)\\
&& = \frac{N(s-1)}{1 + (s-1)} + N^2(1-s)^2 + \left(\frac{s+1}{2s^2} - 1\right)N^2(1-s)^2  +  \mathcal{O}(N^3(1-s)^3) \nonumber \\
&& = N(s-1)(1 + (1-s)) + N^2(1-s)^2 + \mathcal{O}(N^3(1-s)^3) \nonumber \\
&&= N(s-1) + N(N-1)(1-s)^2 + \mathcal{O}(N^3(1-s)^3).
\end{eqnarray}
\end{proof}

\subsection{Case $n=0$}

This is the simplest case. We just have 

\begin{equation}
    h(0, s, N) = 1.
\end{equation}

\subsection{Conclusion}

In the last section, we have proved the following theorem which is more precise than Proposition \ref{propintegercase} if $a$ goes to $0$ fast enough.

\begin{theorem}
\begin{equation}
h(n, s, N) = \left\{
    \begin{array}{ll}
        1 & \mbox{if } n =0 \\
        N(s-1) + N(N-1)(1-s)^2  + \mathcal{O}(N^3(1-s)^3) & \mbox{if } n =1 \\
        N^2(1-s)^2 +  \mathcal{O}\left( N^{6}(1-s)^4\right) + \mathcal{O}(N^{4} (1-s)^3) & \mbox{if } n =2 \\
         \mathcal{O}\left( N^{n+4}(1-s)^4\right) + \mathcal{O}(N^{n+2} (1-s)^3) & \mbox{otherwise.}
    \end{array}
\right.
\end{equation}
\end{theorem}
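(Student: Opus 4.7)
The plan is to combine the four case-by-case asymptotics established in the preceding subsections and verify that the leading terms and error bounds line up with the stated piecewise formula. The $n = 0$ case is immediate: the definition \eqref{hint} reduces to $\int_{[0, \pi]^N} 1 \, d\mu(\theta) = 1$ since $d\mu$ is a probability measure. The $n = 1$ case is exactly the output of the Case $n = 1$ theorem. For $n \geq 2$ I would invoke the Case $n \geq 2$ theorem: when $n = 2$ it produces the leading term $N^2(1-s)^2$ together with errors $\mathcal{O}(N^6(1-s)^4)$ and $\mathcal{O}(N^4(1-s)^3)$; for $n \geq 3$ the main term $(1-s)^n N^n$ coming from $\frac{(s-1)^n}{s^n}N^n$ in equation \eqref{h3} is absorbed into $\mathcal{O}(N^{n+2}(1-s)^3)$, since $(1-s)^{n-3}$ is $o(1)$ under the running assumption $a = o(1)$.

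The actual difficulty — already confronted in the Case $n \geq 2$ theorem — lies in the multinomial expansion \eqref{h} split by the value of $p$: the $p = n$ term gives the leading $(s-1)^n N^n / s^n$; the $p = n-1$ term is controlled via the $1$-point correlation function and Lemma \ref{boundingR_1}, producing $\mathcal{O}(N^{n+2}(1-s)^{n+1})$; and for $p \leq n-2$ one applies Theorem \ref{bothcases} together with the bookkeeping collapse
\begin{equation}
\sum_{p + l_1 + \cdots + l_N = n} \binom{n}{p, l_1, \ldots, l_N} [N(1-s)]^p \prod_{j=1}^N 1^{l_j} = (N(1-s) + N)^n \ll N^n.
\end{equation}
The hardest step, were one proving the whole statement from scratch, is controlling $g(l, s, N)$ when both $s \to 1$ and $\theta \to 0$ conspire to make the denominator $(1-s)^2 + 4s\sin^2(\theta/2)$ small. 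The rescue is provided by the repulsion of eigenvalues from $1$: Lemma \ref{boundingR_1} gives $R_1(\theta) \ll N^3 \sin^2(\theta/2)$, which cancels the near-singularity after the change of variables $t = \tan(\theta/2)$, and the refined estimate $S_{2N}(x) - N/\pi = \mathcal{O}(N^2|x|)$ tames the $2$-point cross term $(S_{2N}(\theta_1 - \theta_2) - S_{2N}(\theta_1 + \theta_2))^2$. With these tools in hand, the present theorem is just a matter of stitching the four case results together and verifying the dominance relations among the error terms in each regime of $n$.
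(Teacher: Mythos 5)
Your proposal is correct and follows the paper's route: this theorem is simply a compilation of the preceding $n=0$, $n=1$, and $n\geq 2$ case results, and your account of the underlying machinery (the multinomial split by $p$, Lemma \ref{boundingR_1}, and the $S_{2N}(x) - N/\pi = \mathcal{O}(N^2|x|)$ estimate) is faithful. One small slip in the justification: for $n = 3$ the factor $(1-s)^{n-3}$ equals $1$, not $o(1)$; the term $N^n(1-s)^n$ is nevertheless absorbed into $\mathcal{O}(N^{n+2}(1-s)^3)$ because the ratio is $(1-s)^{n-3}/N^{2}$, which tends to zero thanks to the extra $N^{2}$ in the denominator.
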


By assuming that $\frac{-1}{2} \leq \frac{2(s-1)}{s+1}N$ and using Proposition \ref{propbound}, we know that

\begin{equation}
(s-1)^K\int_{SO(2N+1)} \left(\frac{\Lambda_X'}{\Lambda_X}(s)\right)^K \dX =  \sum_{n = 0}^{m} \binom{K}{n}h(n, s, N) + \mathcal{O}(h(m+1, s, N))
\end{equation}
where $m$ is odd and $m\geq \lfloor K \rfloor$. 

Let's assume that $m\geq 3$. By applying the above theorem we have arrived at the following theorem.

\begin{theorem}\label{bigtheorem2}
Let $\Lambda_X(s)$ denote the characteristic polynomial of a matrix $X \in SO(2N+1)$, the group of odd orthogonal random matrices of odd dimension and determinant 1, equipped with the Haar measure $\dX$. Let $K\in \mathbb{R}_{+}$ the moment parameter. Assume that $\frac{2(1-s)}{s+1}N \leq \frac{1}{2}$ and take $m$ an odd integer such as $m\geq \lfloor K \rfloor$ and $m\geq 3$. We have that

\begin{multline}\label{bigtheorem}
(s-1)^K\int_{SO(2N+1)} \left(\frac{\Lambda_X'}{\Lambda_X}(s)\right)^K \dX =\\  1 - KN(1-s) - KN(1-s)^2 + \frac{K(K+1)}{2}N^2(1-s)^2  +\mathcal{O}\left( N^{m+5}(1-s)^4\right) +\mathcal{O}(N^{m+3}(1-s)^3).
\end{multline}

Equivalently, we may say

\begin{multline}
\left|(s-1)^K\int_{SO(2N+1)} \left(\frac{\Lambda_X'}{\Lambda_X}(s)\right)^K \dX -\right.\\  \left.\left[1 - KN(1-s) - KN(1-s)^2 + \frac{K(K+1)}{2}N^2(1-s)^2\right]\right|\leq M_K \left( N^{m+5}(1-s)^4 + N^{m+3}(1-s)^3 \right),
\end{multline}

where $M_K>0$ is a constant which only depends on $K$.

\end{theorem}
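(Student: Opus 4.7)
The proof is essentially a bookkeeping exercise assembling two ingredients already in hand: the polynomial approximation from Proposition \ref{propbound} and the case-by-case asymptotics of $h(n,s,N)$ from the immediately preceding theorem. The plan is to substitute, expand, and then verify that all secondary terms are absorbed into the claimed error.

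The plan is to start by invoking Proposition \ref{propbound} with the chosen $m$ (odd, $m\geq \lfloor K\rfloor$, $m\geq 3$), which gives
\begin{equation}
(s-1)^K\int_{SO(2N+1)} \left(\frac{\Lambda_X'}{\Lambda_X}(s)\right)^K \dX = \sum_{n=0}^{m} \binom{K}{n} h(n,s,N) + \mathcal{O}\bigl(h(m+1,s,N)\bigr).
\end{equation}
Into this I would substitute the explicit expansions from the preceding theorem. The $n=0$ term contributes $1$. The $n=1$ term, after multiplying by $\binom{K}{1}=K$ and rewriting $N(s-1)=-N(1-s)$ and $N(N-1)(1-s)^2 = N^2(1-s)^2 - N(1-s)^2$, contributes $-KN(1-s) + KN^2(1-s)^2 - KN(1-s)^2$ modulo $\mathcal{O}(N^3(1-s)^3)$. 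The $n=2$ term gives $\binom{K}{2}N^2(1-s)^2 = \frac{K(K-1)}{2}N^2(1-s)^2$ modulo $\mathcal{O}(N^6(1-s)^4)+\mathcal{O}(N^4(1-s)^3)$.

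The key algebraic check is that the three $N^2(1-s)^2$ contributions, $KN^2(1-s)^2$ from $h(1)$ and $\frac{K(K-1)}{2}N^2(1-s)^2$ from $h(2)$, combine to give $\frac{K(K+1)}{2}N^2(1-s)^2$, which is exactly the stated main term. For the range $3 \leq n \leq m$, the "otherwise" case of the preceding theorem yields $h(n,s,N) = \mathcal{O}(N^{n+4}(1-s)^4)+\mathcal{O}(N^{n+2}(1-s)^3)$. Since $n\leq m$ and the binomial coefficients $\binom{K}{n}$ depend only on $K$ and $m$, summing over $n=3,\dots,m$ produces $\mathcal{O}(N^{m+4}(1-s)^4)+\mathcal{O}(N^{m+2}(1-s)^3)$. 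Finally, applying the same case with $n=m+1$ to the Taylor remainder gives $\mathcal{O}(h(m+1,s,N)) = \mathcal{O}(N^{m+5}(1-s)^4)+\mathcal{O}(N^{m+3}(1-s)^3)$, which dominates every other error term (including the $\mathcal{O}(N^3(1-s)^3)$ from $h(1)$ and $\mathcal{O}(N^6(1-s)^4)+\mathcal{O}(N^4(1-s)^3)$ from $h(2)$, since $m\geq 3$).

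There is no hard step, only accounting: the only place to go wrong is the arithmetic combining the $N^2(1-s)^2$ coefficients, and the verification that the Taylor-remainder bound $\mathcal{O}(N^{m+5}(1-s)^4)+\mathcal{O}(N^{m+3}(1-s)^3)$ is indeed the largest among all error contributions. The hypothesis $\frac{2(1-s)}{s+1}N\leq \frac{1}{2}$ is precisely the condition needed to apply Proposition \ref{propbound}, so it enters only at the first step. Collecting everything yields the stated equality, and the equivalent absolute-value formulation follows by absorbing the (finitely many, $K$- and $m$-dependent) implicit constants into a single constant $M_K$.
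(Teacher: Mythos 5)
Your proof is correct and follows essentially the same route as the paper: apply Proposition \ref{propbound} to reduce to the truncated binomial sum $\sum_{n=0}^m\binom{K}{n}h(n,s,N)+\mathcal{O}(h(m+1,s,N))$, substitute the case-by-case expansions of $h(n,s,N)$ from the preceding theorem, and verify that the $N^2(1-s)^2$ coefficients from $n=1$ and $n=2$ combine to $\tfrac{K(K+1)}{2}$ while all lower-order and high-$n$ errors are absorbed by the $n=m+1$ remainder (using $N\geq 1$ and $m\geq 3$). The paper states this step only implicitly ("by applying the above theorem"); your accounting fills in the details faithfully.
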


We can now apply this theorem to the specific case where $s = e^{-a/N}$ where $a$ is a positive function which goes to $0$ as $N$ gets large.

\begin{customthm}{3}
Let $\Lambda_X(s)$ denote the characteristic polynomial of a matrix $X \in SO(2N+1)$, the group of odd orthogonal random matrices of odd dimension and determinant 1, equipped with the Haar measure $\dX$. Let $\alpha = a/N>0$ with $a = o(1)$ as $N \to \infty$, the moment parameter $K \in \mathbb{R}_{+}$, and $m$ be an odd integer such as $m\geq 3$ and $m\geq \lfloor K\rfloor$. Then, 

\begin{multline}
\left(\frac{-a}{N}\right)^K\int_{SO(2N+1)} \left(\frac{\Lambda_X'}{\Lambda_X}(e^{-\alpha})\right)^K \dX \underset{N\to +\infty}{=} \\1 - Ka + \frac{Ka}{2N} + \frac{K(K+1)}{2}a^2 - \frac{K(K+1)a^2}{2N} + \frac{a^2}{N^2}K\frac{3K-1}{24} + \mathcal{O}(N^{m+1} a^4) + \mathcal{O}(N^m a^3).
\end{multline}   
\end{customthm}

To obtain an interesting result, we can ask that $N^m a^3 = o(\frac{a^2}{N^2})$ and $N^{m+1} a^4 = o(\frac{a^2}{N^2})$ i.e. $a = o\left(\frac{1}{N^{m+2}}\right)$ as $N\to \infty$ since $m \geq 3$. And so, we can see that this theorem provides a better result than Theorem \ref{absthm}.

\begin{proof}

First we note the following approximations as $N \to \infty$:

\begin{eqnarray}
1-s &&  = 1 - e^{-a/N} \nonumber \\
&& \underset{N\to +\infty}{=} \frac{a}{N} - \frac{a^2}{2N^2} + \mathcal{O}\left(\frac{a^3}{N^3}\right),
\end{eqnarray} 

\begin{eqnarray}
(1-s)^2 &&= \left(1 - e^{-a/N}\right)^2 \nonumber \\
&& \underset{N\to +\infty}{=} \left(\frac{a}{N} + \mathcal{O}\left(\frac{a^2}{N^2}\right)\right)^2 \nonumber \\
&&  \underset{N\to +\infty}{=} \frac{a^2}{N^2}\left(1 + \mathcal{O}\left(\frac{a}{N}\right)\right)^2 \nonumber \\
&&  \underset{N\to +\infty}{=} \frac{a^2}{N^2} + \mathcal{O}\left(\frac{a^3}{N^3}\right)
\end{eqnarray}

and 

\begin{eqnarray}
\frac{1}{(s-1)^k} && = \frac{1}{(e^{-a/N} - 1)^K} \nonumber \\
&& \underset{N\to +\infty}{=} \frac{1}{(\frac{-a}{N} + \frac{a^2}{2N^2} - \frac{a^3}{6N^3} + \mathcal{O}\left(\frac{a^4}{N^4}\right))^K} \nonumber \\
&& \underset{N\to +\infty}{=} \left(\frac{-N}{a}\right)^K \frac{1}{\left(1 - \frac{a}{2N} + \frac{a^2}{6N^2} + \mathcal{O}\left(\frac{a^3}{N^3}\right)\right)^K} \nonumber \\
&& \underset{N\to +\infty}{=} \left(\frac{-N}{a}\right)^K\left(1 -K\left(\frac{-a}{2N} + \frac{a^2}{6N^2}\right) + \frac{K(K+1)}{2}\left(\frac{a}{2N}\right)^2 + \mathcal{O}\left(\frac{a^3}{N^3}\right)\right) \nonumber \\
&& \underset{N\to +\infty}{=} \left(\frac{-N}{a}\right)^K\left(1 + \frac{Ka}{2N}+ \frac{a^2}{N^2} K\frac{3K - 1}{24} + \mathcal{O}\left(\frac{a^3}{N^3}\right)\right).
\end{eqnarray}
Using these approximations in \eqref{bigtheorem}, we obtain that

\begin{multline}
\left(\frac{-a}{N}\right)^K\int_{SO(2N+1)} \left(\frac{\Lambda_X'}{\Lambda_X}(s)\right)^K \dX \underset{N\to +\infty}{=}\\
\left(1 + \frac{Ka}{2N}+ \frac{a^2}{N^2} K\frac{3K - 1}{24}\right)\left(1 - K N\left(\frac{a}{N} - \frac{a^2}{2N^2}\right) - KN \frac{a^2}{N^2} + \frac{K(K+1)}{2}N^2 \frac{a^2}{N^2}\right) + \mathcal{O}(N^{m+1} a^4) + \mathcal{O}(N^m a^3)
\end{multline}

\begin{eqnarray}
&& \underset{N\to +\infty}{=} \left(1 + \frac{Ka}{2N}+ \frac{a^2}{N^2} K\frac{3K - 1}{24}\right)\left(1 - Ka  + \frac{K(K+1)}{2}a^2- K \frac{a^2}{2N}\right) + \mathcal{O}(N^{m+1} a^4) + \mathcal{O}(N^m a^3) \nonumber \\
&& \underset{N\to +\infty}{=} 1 - Ka + \frac{Ka}{2N} + \frac{K(K+1)}{2}a^2 - \frac{Ka^2}{2N}- \frac{K^2 a^2}{2N} + \frac{a^2}{N^2}K\frac{3K-1}{24} + \mathcal{O}(N^{m+1} a^4) + \mathcal{O}(N^m a^3) \nonumber \\
&& \underset{N\to +\infty}{=} 1 - Ka + \frac{Ka}{2N} + \frac{K(K+1)}{2}a^2 - \frac{K(K+1)a^2}{2N} + \frac{a^2}{N^2}K\frac{3K-1}{24} + \mathcal{O}(N^{m+1} a^4) + \mathcal{O}(N^m a^3).
\end{eqnarray}

In contrast with Theorem \ref{bigtheorem2}, here we must have that $N$ goes to $+\infty$ because we want $\frac{2(1-s)}{s+1}N\leq \frac{1}{2}$ to hold, which is true for $\frac{2(1-s)}{s+1}N\sim a  = o(1)$.

\end{proof}

\newcommand{\etalchar}[1]{$^{#1}$}


\begin{thebibliography}{BBB{\etalchar{+}}19b}

\bibitem[AKW20]{AKW}
T.~Assiotis, J.P. Keating, and J.~Warren.
\newblock {On the joint moments of the characteristic polynomials of random
  unitary matrices.}
\newblock 2020.
\newblock To appear in IMRN. {arXiv:2005.13961}.

\bibitem[AS20]{AS}
E.~Alvarez and N.C. Snaith.
\newblock {Moments of the logarithmic derivative of characteristic polynomials
  from $SO(N)$ and $USp(2N)$.}
\newblock {\em J. Math. Phys.}, 61(10), 2020.

\bibitem[BBB{\etalchar{+}}19a]{MM}
E.~C. Bailey, S.~Bettin, G.~Blower, J.~B. Conrey, A.~Prokhorov, M.~O.
  Rubinstein, and N.~C. Snaith.
\newblock {Mixed moments of characteristic polynomials of random unitary
  matrices}.
\newblock {\em J. Math. Phys.}, 60(8), 2019.

\bibitem[BBB{\etalchar{+}}19b]{Basor}
E.~Basor, P.~Bleher, R.~Buckingham, T.~Grava, A.~Its, E.~Its, and J.P. Keating.
\newblock A representation of joint moments of {CUE} characteristic polynomials
  in terms of {P}ainlev{\'e} functions.
\newblock {\em Nonlinearity}, 32(10):4033--4078, 2019.

\bibitem[CFK{\etalchar{+}}05]{CFKRS}
J.~B. Conrey, D.~W. Farmer, J.~P. Keating, M.~O. Rubinstein, and N.~C. Snaith.
\newblock {Integral moments of $L$-functions}.
\newblock {\em Proc. London Math. Soc.}, 91(1):33--104, 2005.

\bibitem[Con05]{Conrey}
J.B. Conrey.
\newblock Notes on eigenvalue distributions for the classical compact groups.
\newblock In {\em Recent perspectives on random matrix theory and number
  theory, LMS Lecture Note Series 322}, pages 111--45. Cambridge University
  Press, Cambridge, 2005.

\bibitem[FGLL13]{FGLL}
D.~Farmer, S.M. Gonek, Y.~Lee, and S.~J. Lester.
\newblock {Mean values of $\zeta^{'}/\zeta (s)$, correlations of zeros and the
  distribution of almost primes}.
\newblock {\em Quart. J. Math.}, 64(4):1057--1089, 2013.

\bibitem[For22]{for22}
P.J. Forrester.
\newblock Joint moments of a characteristic polynomial and its derivative for
  the circular $\beta$-ensemble.
\newblock {\em Probability and Mathematical Physics}, 3(1):145--170, 2022.

\bibitem[GGM01]{GGM}
D.A. Goldston, S.M. Gonek, and H.L. Montgomery.
\newblock {Mean values of the logarithmic derivative of the Riemann
  zeta-function with applications to primes in short intervals}.
\newblock {\em J. Reine Angew Math.}, 537:105--126, 2001.

\bibitem[GS03]{gransound}
Andrew Granville and K.~Soundararajan.
\newblock The distribution of values of ${L} (1, \chi_d)$.
\newblock {\em Geometric and functional analysis}, 13(5):992--1028, 2003.

\bibitem[Guo96]{Guo}
C.R. Guo.
\newblock {The distribution of the logarithmic derivative of the Riemann zeta
  function}.
\newblock {\em Proc. London Math Soc.}, 72(3):1--27, 1996.

\bibitem[HM99]{hattori}
Tetsuya Hattori and Kohji Matsumoto.
\newblock A limit theorem for {B}ohr-{J}essen's probability measures of the
  {R}iemann zeta-function.
\newblock {\em Journal f{\"u}r die Reine und Angewandte Mathematik (Crelles
  Journal)}, 507:219 -- 232, 1999.

\bibitem[HM22]{hamieh1}
Alia Hamieh and Rory McClenagan.
\newblock The distribution of values of
  {$\frac{L'}{L}(1/2+\varepsilon,\chi_D)$}.
\newblock {\em Journal of Number Theory}, 238:1044--1062, 2022.

\bibitem[Hug01]{kn:hug01}
C.P. Hughes.
\newblock {\em On the characteristic polynomial of a random unitary matrix and
  the {R}iemann zeta function}.
\newblock PhD thesis, University of Bristol, 2001.

\bibitem[Hug03]{kn:hughes03}
C.P. Hughes.
\newblock Random matrix theory and discrete moments of the {R}iemann zeta
  function.
\newblock {\em J. Phys. A: Math. Gen.}, 36(12):2907--2917, 2003.

\bibitem[Iha06]{Ihara}
Yasutaka Ihara.
\newblock On the euler-kronecker constants of global fields and primes with
  small norms.
\newblock In Victor Ginzburg, editor, {\em Algebraic Geometry and Number
  Theory: In Honor of Vladimir Drinfeld's 50th Birthday}, pages 407--451.
  Birkh{\"a}user Boston, Boston, MA, 2006.

\bibitem[IM11]{Mats}
Yasutaka Ihara and Kohji Matsumoto.
\newblock On {$\log L$} and {$L'/L$} for {$L$}-functions and the associated
  ``{$M$}-functions”: {C}onnections in optimal cases.
\newblock {\em Moscow Mathematical Journal}, 11(1):73--111, 2011.

\bibitem[KS98]{KatzS}
N.~M. Katz and P.~Sarnak.
\newblock {\em Random Matrices, Frobenius Eigenvalues and Monodromy}.
\newblock AMS Colloquium Publications, Providence, 1998.

\bibitem[KS00]{KS}
J.~P. Keating and N.~C. Snaith.
\newblock {Random matrix theory and $\zeta(1/2+it)$}.
\newblock {\em Communications in Mathematical Physics}, 214, 2000.

\bibitem[KS03]{kn:keasna03}
J.P. Keating and N.C. Snaith.
\newblock Random matrices and ${L}$-functions.
\newblock {\em J. Phys. A: Math. Gen.}, 36(12):2859--81, 2003.

\bibitem[Lam14]{lamzouri1}
Y.~Lamzouri.
\newblock The distribution of {E}uler-{K}ronecker constants of quadratic
  fields.
\newblock {\em Journal of Mathematical Analysis and Applications},
  432(2):632--653, 2014.

\bibitem[Les14]{Lester}
S.~J. Lester.
\newblock {The distribution of the logarithmic derivative of the Riemann
  zeta-function}.
\newblock {\em Quart. J. Math.}, 65:1319--1344, 2014.

\bibitem[MM15]{MouMur}
M.~Mourtada and V.K. Murty.
\newblock Distribution of values of~{$L'/L(\sigma,\chi_D)$}.
\newblock {\em Moscow Mathematical Journal}, 15:497--509, 2015.

\bibitem[Sel43]{Selberg}
A.~Selberg.
\newblock {On the normal density of primes in small intervals, and the
  difference between consecutive primes}.
\newblock {\em Arch. Math. Naturevid}, 47(6):87--105, 1943.

\bibitem[Sna10]{kn:snaith10}
N.~C. Snaith.
\newblock Riemann zeros and random matrix theory.
\newblock {\em Milan Journal of Mathematics}, 78(1):135--152, 2010.
\newblock Conference: 1st School of the
  Riemann-International-School-of-Mathematics, Verbania, Italy.

\bibitem[Win12]{Winn}
B.~Winn.
\newblock Derivative moments for characteristic polynomials from the {CUE}.
\newblock {\em Commun. Math. Phys.}, 315:531--562, 2012.

\end{thebibliography}
\end{document}